\newtheorem{proposition}{Proposition}[section]
\newtheorem{lemma}[proposition]{Lemma}
\newtheorem{corollary}[proposition]{Corollary}
\theoremstyle{remark}
\newtheorem{remark}[proposition]{Remark}
\theoremstyle{definition}
\newtheorem{definition}[proposition]{Definition}
\title{Epidemic reconstruction in a phylogenetics framework: transmission trees as partitions}
\author{Matthew Hall
\and
Andrew Rambaut}
\begin{document}
\maketitle

\section{Abstract}

The reconstruction of transmission trees for epidemics from genetic data has been the subject of some recent interest. It has been demonstrated that the transmission tree structure can be investigated by augmenting internal nodes of a phylogenetic tree constructed using pathogen sequences from the epidemic with information about the host that held the corresponding lineage. In this paper, we note that this augmentation is equivalent to a correspondence between transmission trees and partitions of the phylogenetic tree into connected subtrees each containing one tip, and provide a framework for Markov Chain Monte Carlo inference of phylogenies that are partitioned in this way, giving a new method to co-estimate both trees. The procedure is integrated in the existing phylogenetic inference package BEAST.

\section{Introduction}

The increasing availability of faster and cheaper sequencing technologies is making it possible to acquire genetic data on the pathogens involved in outbreaks and epidemics at a very fine resolution. It is likely that in future outbreaks where most or all infected hosts can be identified, one or more pathogen nucleotide sequences will be available from each one as a matter of course. Identification of a high proportion of hosts is plausible in several scenarios, such as agricultural outbreaks, where the infected unit will usually be taken to be the farm and considerable government resources will be employed to identify every one, HIV, where almost all infected individuals will eventually seek treatment, and epidemics involving a population that can be closely monitored, such as those occurring in hospitals or prisons. As a result, much recent work has been performed to develop computational methods to analyse data of this kind, combining it with more traditional epidemiological data \cite{cottam_integrating_2008, aldrin_modelling_2011, ypma_unravelling_2011, jombart_reconstructing_2011, morelli_bayesian_2012, ypma_relating_2013, jombart_bayesian_2014, didelot_bayesian_2014, mollentze_bayesian_2014}. A Bayesian Markov Chain Monte Carlo (MCMC) approach is almost always employed, as the probability spaces involved are of very high dimension and mathematically complicated; the only exception is the study by Aldrin et al. \cite{aldrin_modelling_2011}, which used a maximum-likelihood method.

The most frequent approach to this problem has been to attach a mutation model to a model of transmission, making simplifications that link the process of nucleotide substitution to host-to-host transmission events. Commonly, transmission events are assumed to coincide with times of most recent common ancestor of isolates, ignoring any within-host diversity; the assumption being, in effect, that the phylogeny of the pathogen samples and the transmission tree of the epidemic coincide. No coexistence of separate lineages within the same host is permitted which, over the short timespan of an epidemic, might not be realistic. The alternative is to treat the phylogenetic and transmission trees as separate, although related, entities, and explicitly model a phylogeny occurring within each host. The initial exploration of this was performed by Ypma et al. \cite{ypma_relating_2013}, who linked up individual within-host phylogenies according to a transmission tree structure to build a single tree describing the history of the pathogen lineages for an entire epidemic. They applied the principle to simulated measles outbreaks and data from the 2001 UK foot and mouth disease outbreak, using rather different mathematical formulations for each.  Our objective here is to build a general framework for an analysis of this sort, that is publicly available and easily modifiable for different models of host-to-host transmission, within-host pathogen population dynamics and nucleotide substitution. 

The MCMC procedure used by Ypma et al. \cite{ypma_relating_2013} treated every individual within-host phylogeny as a distinct entity and modified them individually. Two previous papers have noted that, instead, a transmission history can be reconstructed by augmenting the internal nodes of a single phylogenetic tree for the entire epidemic with information about the host in which the corresponding lineage was located. Cottam et al. \cite{cottam_integrating_2008} were the first to identify this, and it was recently revisited and refined by Didelot et al. \cite{didelot_bayesian_2014}. These studies, however, have been constrained by the lack of a method to co-estimate the complete phylogeny simultaneously with its node labels; they have instead used a fixed tree pre-generated by a standard phylogenetic method. (Another recent paper, by Vrancken et al. \cite{vrancken_genealogical_2014}, encountered the opposite difficulty, and estimated a phylogeny consistent with a fixed transmission history.) This leads to two problems. Firstly, the use of a single tree will ignore any uncertainty in estimates of the phylogeny. If a Bayesian phylogeny reconstruction method is used, this can be mitigated to some extent by using the same method on each one of a sample of trees drawn from the posterior distribution, but at the cost of greater computation time. Secondly, and more seriously, a time-resolved tree constructed using such a method will usually have been built using assumptions about the pathogen population structure that are incompatible with what we know about an epidemic. Commonly, all viral lineages are assumed to be part of a single, freely mixing population, the probability of a tree calculated based on the assumption that it was generated by a coalescent process in this population. The result is that phylogenies may display features that are not epidemiologically plausible. For example, while mutation rates for, particularly, RNA viruses are fast, it remains true that many sequences collected over the short timescale of an epidemic will be identical \cite{bataille_evolutionary_2011}. If this is the case for two isolates, they are likely to form a ``cherry'' in the reconstructed phylogeny whose time of most recent common ancestor (TMRCA) can take values very close to the sampling time of the earlier isolate, because in a panmictic population, there is no reason to rule this out. In an epidemic situation where each sample is taken from a different host, we know that this is impossible, as there must have been at least one infection event since that TMRCA, and in the time from infection to sampling, a host will have gone through an incubation period and probably also a period from manifestation of symptoms to sampling. If a single tree with these short terminal branch lengths is then used to estimate epidemiological parameters, estimates of times from infection to sampling are unlikely to be reliable.

Our contribution here is threefold. Firstly, we formally establish that the procedure for augmenting internal nodes in a phylogeny identified by Didelot et al. \cite{didelot_bayesian_2014} does indeed allow simultaneous exploration of the complete space of both phylogenies and transmission trees. Secondly, we provide a full Bayesian MCMC framework for estimation of phylogenies using a model of the pathogen population that is consistent with host-to-host transmission during an epidemic, integrating relevant epidemiological data.  Thirdly, as our method is fully integrated into the existing phylogenetics application BEAST \cite{drummond_bayesian_2012}, it provides a freely-available implementation of a method of this type for use by the research community, as well as platform for future development that has access to all the models and methods that are already implemented in that package.

\section{Method}

\subsection{Transmission trees as phylogenetic tree partitions}

We take as our dataset $D$ a set of $N$ sequences, each taken from a different infected unit (be it an infected organism or infected premise - from now on we use the word ``host'', but it need not be a single organism) in an infectious disease outbreak or epidemic, such that the total number of infections was also $N$. Let our set of hosts be $\mathbf{A}=\{a_1,\ldots,a_N\}$.  Let $\mathcal{T}$ be a genealogy describing the ancestral relationship between those $N$ isolates, with branch lengths in units of time. It consists of two components:

\begin{itemize}
\item{A rooted, binary tree $T$ with a set $\mathbf{E}_T$ of $N$ labelled tips (labelled with the elements of $\mathbf{A}$) and a set $\mathbf{I}_T$ of $N-1$ internal nodes. Let $\mathbf{N}_T=\mathbf{E}_T\cup\mathbf{I}_T$ be the complete set of nodes. Let $\Gamma_\mathbf{A}$ be the set of all such trees.} 
\item{A length function $l:\mathbf{N}_T\rightarrow(0,\infty)$ that takes each non-root node of $T$ to the difference in calendar time (in whatever units we choose) between the event represented by that node and the event represented by its parent. The event represented by an element of $\mathbf{E}_T$ is the sampling of the isolate from the host corresponding to $u$'s label; the event represented by an element of $\mathbf{I}_T$ is the existence of the most common ancestor of the isolates that correspond to $v$'s descendants. In contrast to the convention in most phylogenetic methods, we do indeed define a nonzero $l(r)$ for the root node $r$ of $T$. Its value is largely arbitrary, but it must be greater than any plausible value for the time between the existence of the event (generally an ancestor) represented by $r$ and the infection event that seeded the entire outbreak.}
\end{itemize} The length function $l$ allows us to also define a height function $h:\mathbf{N}_T\rightarrow[0,\infty)$ that takes each node to the difference in time between the event represented by that node and the time at which the last isolate was sampled.

For our purposes, we define a transmission tree on $\mathbf{A}$ to be a rooted tree with $N$ nodes labelled with the elements of $\mathbf{A}$.  The root node of such a tree is labelled with the first case in the outbreak, and the children of a node are labelled with the hosts that were directly infected by that node's label. In this framework, transmission trees do not contain timing information and consist solely of a description of which host infected which others. They are not binary and a node can have any number of children. In fact, if  $\mathcal{N}$ is such a tree, it can be thought of as a map $\mathcal{N}:\mathbf{A}\rightarrow\mathbf{A}\cup\emptyset$ taking each host $a_i$ to its infector $\mathcal{N}(a_i)$, or to $\emptyset$ if $a_i$ is the first host, and we will use this notation henceforth.

Let $\Pi_\mathbf{A}$ be the set of all transmission trees on $\mathbf{A}$. ($\Pi_\mathbf{A}$ has cardinality $N^{N-1}$ by Cayley's formula, as there are $N^{N-2}$ such trees and $N$ choices of root for each.) Take $T$ be a phylogenetic tree as above, describing the ancestry of $\mathbf{A}$, and assume no reinfection of hosts. We are interested in the set of transmission trees in $\Pi_\mathbf{A}$ that are consistent with the ancestry represented by $T$. Let $\Omega^T$ be the set of partitions of the set of nodes of $T$ such that:

\begin{itemize}
\item{If $\mathcal{P}\in\Omega^T$ and $p\in \mathcal{P}$, then the removal from $T$ of all nodes in $\mathbf{N}_T$ that are not in $p$, and all edges adjacent to at least one of them, leaves a connected graph.}
\item{All elements of $\Omega^T$ contain one and only one tip of $T$.}
\end{itemize}

For $\mathcal{P}\in\Omega^T$, define a map $\delta_\mathcal{P}:\mathbf{N}_T\to\mathbf{A}$ that takes each node of $T$ to the label of the tip that is in the same element of $\mathcal{P}$ as itself. For each $a_i\in\mathbf{A}$, let $S_{\mathcal{P},i}$ be the subtree of $T$ constructed by removing all nodes, and edges adjacent to them, that do not map to $a_i$ under $\delta_\mathcal{P}$. Because $S_{\mathcal{P},i}$ is connected, it has a single root node.  Define a second map $\epsilon_\mathcal{P}:\mathbf{A}\to\mathbf{N}_T$ taking each $a_i$ to this root node. For brevity write $s_i=\epsilon_\mathcal{P}(a_i)$. All $s_i$ have a parent $s_iP$ in $T$, except for the root $r$ of $T$ (which must be the root of one such subtree). We also define a map $\gamma:\mathbf{A}\rightarrow \mathbf{E}_T$ taking a host to the tip of $T$ which is labelled with it.

If $T$ does indeed describe the ancestral relationships between the isolates collected from the elements of $\mathbf{A}$, and we know that we have sampled every host and that there is no reinfection, it is quite intuitively clear (see figure~\ref{sameoldsameold}) that an element $\mathcal{P}$ of $\Omega^T$ corresponds to a transmission history for the epidemic. The preimage of $a_i\in\mathbf{A}$ under $\delta_\mathcal{P}$ is the set of nodes that make up $S_{\mathcal{P},i}$. Infection events occur along branches of $T$ whose start and end nodes are in different elements of $\mathcal{P}$. The assumption of no reinfection mandates the connectedness requirement (or there would be multiple introductions to the same host) and the assumption that all hosts in the outbreak were sampled mandates that each element of $\mathcal{P}$ contains a tip (because one that did not would correspond to an unsampled host).

To formalise the correspondence, we construct a map $z:\Omega^T\rightarrow\Pi_\mathbf{A}$ such that if $\mathcal{P}\in\Omega^T$ and $a_i\in\mathbf{A}$, 
\begin{eqnarray*}
z(\mathcal{P})(a_i) = \begin{cases} 
\delta_\mathcal{P}(s_iP) & s_i\neq r\\
\emptyset & s_i= r
\end{cases}
\end{eqnarray*}

\begin{proposition}
For $\mathcal{P}\in\Omega_T$, the directed graph given by drawing an edge from $z(\mathcal{P})(a_i)$ to $a_i$ for all $a_i\in\mathbf{A}$ is a tree, and if $r$ is the root of $T$, the directionality coincides with that given by taking $\delta_\mathcal{P}(r)$ to be its root. 
\end{proposition}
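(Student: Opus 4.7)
The plan is to verify three things about the directed graph $G$ induced by $z(\mathcal{P})$: (i) exactly one host, namely $\delta_\mathcal{P}(r)$, has $z$-image $\emptyset$ (and hence in-degree $0$ in $G$), while every other host has in-degree exactly $1$; (ii) $G$ is acyclic; (iii) every host is reachable from $\delta_\mathcal{P}(r)$ along directed edges. Together these characterise a rooted tree with root $\delta_\mathcal{P}(r)$.

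First I would pin down the root host. Because $\mathcal{P}$ partitions $\mathbf{N}_T$ and each block of $\mathcal{P}$ contains exactly one tip, the block containing $r$ contains a unique tip, whose label I call $a_j$. Since $S_{\mathcal{P},j}$ is connected and contains $r$, and $s_j$ is its root in $T$, we must have $s_j=r$; hence $z(\mathcal{P})(a_j)=\emptyset$ and $\delta_\mathcal{P}(r)=a_j$. For any other $a_k$, the tip labelled $a_k$ lies in a block that does not contain $r$, so $s_k\ne r$ and $z(\mathcal{P})(a_k)\in\mathbf{A}$. This establishes (i) and fixes the intended root.

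The key geometric observation, which I expect to be the real content of the argument, is this: whenever $z(\mathcal{P})(a_i)=a_k$ with $s_i\ne r$, the node $s_k$ is a strict ancestor of $s_i$ in $T$. Indeed, $s_iP$ is a proper ancestor of $s_i$ in $T$, and $\delta_\mathcal{P}(s_iP)=a_k$ puts $s_iP\in S_{\mathcal{P},k}$. Because $S_{\mathcal{P},k}$ is a connected subtree of $T$ whose root is $s_k$, every node of $S_{\mathcal{P},k}$ is either $s_k$ or a descendant of $s_k$, so $s_k$ lies on the path from $s_iP$ up to $r$ in $T$; in particular $s_k$ is an ancestor of $s_i$ and $s_k\ne s_i$. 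Iterating $z(\mathcal{P})$ therefore moves strictly up the tree $T$ at every step, so the sequence $a_i,\,z(\mathcal{P})(a_i),\,z(\mathcal{P})^2(a_i),\ldots$ is finite and must terminate at the only host on which $z(\mathcal{P})$ takes the value $\emptyset$, namely $a_j$. This simultaneously rules out directed cycles in $G$ (any cycle would contradict strict ascent in $T$), giving (ii), and produces, by reversing the sequence, a directed path in $G$ from $a_j$ to $a_i$, giving (iii). A graph on $N$ vertices with each non-root vertex of in-degree $1$, no directed cycles, and every vertex reachable from a distinguished vertex is a rooted tree with that vertex as root, so the directionality of $G$ matches that obtained by taking $\delta_\mathcal{P}(r)=a_j$ as the root, as claimed. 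The only delicate point is the ancestor claim above; everything else is bookkeeping about partitions and rooted trees.
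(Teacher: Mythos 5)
Your proof is correct, and the central device --- the strict-ascent lemma stating that if $z(\mathcal{P})(a_i)=a_k$ then $s_k$ is a proper ancestor of $s_i$ in $T$ --- is sound: since $\delta_\mathcal{P}(s_iP)=a_k$ places $s_iP$ in $S_{\mathcal{P},k}$, and every node of the connected subtree $S_{\mathcal{P},k}$ is $s_k$ or a descendant of $s_k$, the claim follows. The paper instead verifies simplicity, connectedness and acyclicity as three separate case arguments: it rules out $2$-cycles by deriving a mutual ancestor/descendant contradiction, establishes connectedness by tracing the path in $T$ from each tip to $r$ through the intervening subtree roots (which is essentially your reachability argument in different clothing), and rules out longer cycles by concatenating directed paths inside the subtrees $S_{\mathcal{P},o(i)}$ to manufacture a cycle in $T$ itself. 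Your version folds all three into the single monotonicity observation: once iterating $z(\mathcal{P})$ provably moves the subtree root strictly toward $r$, acyclicity is immediate from well-foundedness, termination of the orbit at the unique host with $s_j=r$ gives reachability, and the in-degree bookkeeping does the rest. What you lose relative to the paper is nothing of substance; what you gain is that the acyclicity argument no longer requires the somewhat delicate construction of an explicit cycle in $T$, and the identification of the root host $\delta_\mathcal{P}(r)$ falls out of the same lemma rather than being handled separately at the end.
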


\begin{proof}
For the first part, we must show that the graph is simple, connected, and has no cycles. For simplicity, the construction will never give a node with indegree greater than 1, so if two edges join the same two nodes then their directionality is different. Suppose $a_i,a_j\in\mathbf{A}$ are such that $a_i=\delta_\mathcal{P}(s_jP)$ and $a_j=\delta_\mathcal{P}(s_iP)$. Now $s_i$ and $s_jP$ (which may not be distinct) are nodes of $S_{\mathcal{P},i}$, and $s_j$ as a descendant of $s_jP$ is also a descendant of $s_i$ in $T$. Similarly, $s_i$ is a descendant of $s_j$. This contradicts the fact that $T$, as a tree, has no cycles, or, if $s_j=s_iP$ and $s_i=s_jP$, that it is simple.

For connectedness, again suppose $a_i\in\mathbf{A}$ and let $a_j=\delta_\mathcal{T}(r)$; the root $a_j$ of $S_{\mathcal{P},j}$ is the root of $T$. It may be that $a_i=a_j$. If not, the path in $T$ from $a_i$ to $a_j$ passes through $n\geq2$ elements of $\mathcal{P}$ whose elements map under $\delta_\mathcal{P}$ to the hosts $a_{o(1)},\ldots,a_{o(n)}\in\mathbf{A}$, where $o$ is some permutation of $\{1,\ldots,N\}$ with $o(1)=i$ and $o(n)=j$. In particular it must pass through the root nodes of all these subtrees, $s_{o(1)},\ldots,s_{o(n)}$, implying that $z(\mathcal{P})(a_{o(k)})=a_{o(k+1)}$ for all $1\leq k \leq n-1$. It follows that $(z(\mathcal{P}))^{n-1}(a_i)=a_j$; thus all hosts in $\mathbf{A}$ are connected to $a_j$ and each other.

Suppose $z(\mathcal{P})$ has a cycle. It must be a directed cycle or else $z(\mathcal{P})$ has a node with indegree greater than 1. With $o$ denoting a permutation of $\{1,\ldots,N\}$ as before, suppose the cycle has $n\geq 3$ (if $n=2$ then the graph is not simple) elements $a_{o(1)},\ldots,a_{o(n)}$ such that $z(\mathcal{P})(a_{o(k)})=a_{o(k+1)}$ for all $1\leq k \leq n-1$ and $z(\mathcal{P})(a_{o(n)})=a_{o(1)}$. If $i\geq2$, the $S_{\mathcal{P},o(i)}$ is a subtree of $T$ containing a root node $s_{o(i)}$ and the parent $s_{o(i-1)}P$ of the root node of the subtree $S_{\mathcal{P},o(i-1)}$; similarly $S_{\mathcal{P},o(1)}$ contains $s_{o(n)}P$. Since $S_{\mathcal{P},o(i)}$ for each $i$ contains a sequence of nodes, following the directedness of $T$ induced by its root, running from $s_{o(i)}$ to $s_{o(i-1)}P$ to and there is a directed link from each $s_{o(i)}P$ to $s_{o(i)}$ in $T$, the concatenation of all of these forms a cycle in $T$, contradicting the fact it is a tree.

For the second part, there is no node $z(\mathcal{P})(\delta_\mathcal{P}(r))$ by construction, and we have already shown that our construction produces a directed path from each $a\in\mathbf{A}$ to $\delta_\mathcal{P}(r)$. As we have shown $z(\mathcal{P})$ is a tree, this is the only such path, hence the directedness of all edges is towards $\delta_\mathcal{P}(r)$.
\end{proof}

\begin{proposition}\label{inj}
$z$ is injective.
\end{proposition}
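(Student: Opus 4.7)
My plan is to show that the partition $\mathcal{P}$ can be reconstructed from $\tau = z(\mathcal{P})$ and $T$, which immediately gives injectivity.  An element of $\Omega^T$ is determined by the $N$ subtree roots $s_i = \epsilon_\mathcal{P}(a_i)$: for any $u \in \mathbf{N}_T$, $\delta_\mathcal{P}(u)$ equals the label $a_i$ for which $s_i$ is the deepest of the $s_j$ on the path from $u$ to the root of $T$.  Hence it suffices to recover each $s_i$ from $\tau$.

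The key lemma is that $s_i$ is the most recent common ancestor in $T$ of the tips $\{\gamma(a_j):a_j\in\mathcal{D}_\tau(a_i)\}$, where $\mathcal{D}_\tau(a_i)$ denotes $\{a_i\}$ together with every host to which a directed path from $a_i$ exists in $\tau$.  Because this expression depends only on $\tau$ and $T$, the $s_i$, and hence $\mathcal{P}$ itself, are determined by $\tau$.

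To prove the lemma I will first show that $\gamma(a_j)$ is a descendant of $s_i$ in $T$ for every $a_j \in \mathcal{D}_\tau(a_i)$, by induction on the length of the directed path from $a_i$ to $a_j$ in $\tau$.  In the inductive step, if $\tau(a_j) = a_l$ with $a_l$ closer to $a_i$, then by definition of $z$ we have $\delta_\mathcal{P}(s_jP) = a_l$, so $s_jP \in S_{\mathcal{P},l}$ and $s_j$ is therefore a descendant of $s_l$; combined with the induction hypothesis that $s_l$ is a descendant of $s_i$, this makes $s_j$, and so $\gamma(a_j)$, a descendant of $s_i$.  For the converse, suppose $v$ is a proper descendant of $s_i$ in $T$ that still has every such $\gamma(a_j)$ as a descendant, let $w'$ be the child of $s_i$ not lying on the path to $v$, and pick any tip descendant $\gamma(a_k)$ of $w'$.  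Walking from $\gamma(a_k)$ up towards $s_i$ in $T$ traverses successive partition elements $S_{\mathcal{P},k}, S_{\mathcal{P},\tau(a_k)}, \ldots, S_{\mathcal{P},i}$, each boundary crossing from some $s_m$ to $s_mP$ applying $\tau$ to the current hostname, so $a_k \in \mathcal{D}_\tau(a_i)$; but $\gamma(a_k)$ lies under $w'$ and not under $v$, contradicting the assumption on $v$.

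The main technical point will be verifying the ``walking up'' step---that the sequence of hostnames read off from a path in $T$ really is a chain of iterated infections in $\tau$.  This is essentially the mirror of the first induction and follows from the same identity $\delta_\mathcal{P}(s_mP) = \tau(a_m)$ applied at every subtree boundary.
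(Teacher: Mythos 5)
Your proposal is correct, and it takes a genuinely different route from the paper. The paper argues directly by contradiction: given $\mathcal{P}\neq\mathcal{P}'$ with $z(\mathcal{P})=z(\mathcal{P}')$, it picks a highest node $u$ of $T$ at which the two partitions disagree (so that either $u=r$ or the parents agree) and derives a contradiction through a case analysis on whether $u$ is the root and on whether $\delta_\mathcal{P}(uP)$ coincides with one of the two disagreeing labels. You instead exhibit an explicit left inverse: you recover each subtree root $s_i$ from $\tau=z(\mathcal{P})$ as the MRCA in $T$ of the tips of the hosts in $a_i$'s transmission-descendant set $\mathcal{D}_\tau(a_i)$, and then recover $\delta_\mathcal{P}$ from the $s_i$ by the ``deepest ancestral root'' rule. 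Both halves of your key lemma check out: the forward inclusion is a clean induction (and is essentially a sharpening of the paper's Lemma~\ref{ancestors}, applied to roots of subtrees rather than arbitrary nodes), and the converse correctly uses that a partition-element boundary crossed on an upward walk must occur at a subtree root, so the labels read off along the walk form an iterated-infection chain. Your reconstruction buys something the paper's proof does not: an explicit formula for $\mathcal{P}$ in terms of $z(\mathcal{P})$ and $T$, which also makes the non-surjectivity in Proposition~\ref{notsurj} more transparent (a transmission tree is realizable on $T$ only if these MRCAs are nested compatibly with it). One small point you should make explicit: in the converse step you tacitly assume $s_i$ is an internal node with two children; if $s_i$ is a tip then $S_{\mathcal{P},i}=\{\gamma(a_i)\}$, no host can have $\tau(a_j)=a_i$ (its subtree root's parent would have to be a tip), so $\mathcal{D}_\tau(a_i)=\{a_i\}$ and the MRCA claim holds trivially. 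With that degenerate case noted, the argument is complete.
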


\begin{proof}
We suppose the we have two partitions $\mathcal{P},\mathcal{P}'$ that have the same image under $z$, i.e. for all $a_i\in\mathbf{A}$, $z(\mathcal{P})(a_i)=z(\mathcal{P}')(a_i)$. If $\mathcal{P}\neq\mathcal{P}'$ then there exists some node $u$ of $T$ that has $a_i=\delta_\mathcal{P}(u)\neq a_j=\delta_{\mathcal{P}'}(u)$. We can assume that either $u$ is the root of $T$ or $\delta_\mathcal{P}(uP)=\delta_\mathcal{P'}(uP)$ for the parent $uP$ of $u$ (or else we move down $T$ to find a new $u$ for which this is true). 

If $u$ is the root of $T$, then it is the root of the subtrees $S_{\mathcal{P},i}$ and $S_{\mathcal{P}',j}$. This implies $z(\mathcal{P})(a_i)=\emptyset$ but $z(\mathcal{P}')(a_i)\neq\emptyset$ because $z(\mathcal{P}')(a_j)=\emptyset$; only one element of $\mathbf{A}$ can be sent to $\emptyset$ by $z(\mathcal{P}')$ since the root of $T$ is unique. So $uP$ exists.

Let $a_k=\delta_\mathcal{P}(uP)=\delta_\mathcal{P'}(uP)$.  First suppose $k\neq i$ and $k\neq j$. Then $z(\mathcal{P})(a_i)=a_k$. We show that $z(\mathcal{P'})(a_i)=a_k$ is not possible. Let $v=\gamma(h)$. Now $v$ is a descendant of $u$ because $u$ is the root node of the subtree $S_{\mathcal{P},i}$, and $S_{\mathcal{P},i}$ includes $v$. $\mathcal{P'}$ gives rise to another subtree of $T$, $S_{\mathcal{P}',i}$, all of whose nodes map to $a_i$ under $\delta_{\mathcal{P}'}$. This $S_{\mathcal{P}',i}$ has a root node $s'_i$ which is \emph{not} $u$ because $\delta_{\mathcal{P}'}(u)=a_j$. It must, in fact, also be a descendant of $u$; if it were not, $S_{\mathcal{P}',i}$ would be disconnected by $u$. The parent $s'_iP$ cannot have $\delta_{\mathcal{P}'}(s'_i)=a_k$ because either a) $s'_iP=u$ and $\delta_{\mathcal{P}'}(u)=a_j$ by construction or b) $s'_iP\neq u$ and if $\delta_{\mathcal{P}'}(s'_iP)=a_k$ were true, the subtree of nodes that map to $a_k$ under $\delta_{\mathcal{P}'}$ would be disconnected by $u$. Hence $z(\mathcal{P'})(a_i)\neq a_k$.

So without loss of generality suppose $k\neq i$ but $k= j$. Again $z(\mathcal{P})(a_i)=a_k$. Let $v$ be the unique tip of $T$ that has $\delta_\mathcal{P}(v)=\delta_\mathcal{P'}(v)=a_k$. Now, $v$ is not a descendant of $u$. If it were, then $S_{\mathcal{P},k}$, the subtree of $T$ whose nodes are mapped to $a_k$ by $\delta_{\mathcal{P}}$, would be disconnected by $u$, which maps to $a_i$. This implies that there is a descendant $w$ of $u$ in $T$, possibly $u$ itself, which maps to $a_k$ under $\delta_{\mathcal{P}'}$ but neither of whose children $wC_1$ and $wC_2$ do. (If this were not true, a second tip would map to $a_k$ under $\delta_{\mathcal{P}'}$). Whether it is $u$ or not, $w$ cannot map to $a_k$ under $\delta_\mathcal{P}$; if it is $u$ then it does not by construction, and if is not, it would have an ancestor, $u$, which did not, and an earlier ancestor, $uP$, which did, breaking connectedness. This implies that $z(\mathcal{P}')(wC_1)=z(\mathcal{P}')(wC_2)=a_k$ but $z(\mathcal{P})(wC_1)=z(\mathcal{P})(wC_2)\neq a_k$.

\end{proof}

For the next proposition, we need the following:

\begin{lemma}\label{ancestors}
If $a_i,a_j\in\mathbf{A}$ and $\mathcal{N}\in\Pi_\mathbf{A}$ is a transmission tree in which $a_i$ is an ancestor of $a_j$, then if $\mathcal{P}\in\Omega^T$ with $z(\mathcal{P})=\mathcal{N}$ and $u$ is a node of $T$ with $\delta_\mathcal{P}(i)=a_j$, $u$ has an ancestor $v$ in $T$ with $\delta_\mathcal{P}(j)=a_i$.
\end{lemma}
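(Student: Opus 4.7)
The plan is to induct on the length $m$ of the infection chain in $\mathcal{N}$ from $a_j$ up to $a_i$. Since $a_i$ is an ancestor of $a_j$ in $\mathcal{N}$, there is a unique sequence of distinct hosts $a_j = b_0, b_1, \ldots, b_m = a_i$ with $\mathcal{N}(b_\ell) = b_{\ell+1}$ for each $\ell < m$. I would fix an arbitrary node $u \in \mathbf{N}_T$ with $\delta_\mathcal{P}(u) = a_j$ and produce the required ancestor $v$.

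For the base case $m = 1$ we have $\mathcal{N}(a_j) = a_i$, so by the definition of $z$ the parent $s_jP$ of the root $s_j = \epsilon_\mathcal{P}(a_j)$ of the subtree $S_{\mathcal{P},j}$ satisfies $\delta_\mathcal{P}(s_jP) = a_i$. Since $u$ lies in $S_{\mathcal{P},j}$, which is a connected subtree of $T$ with root $s_j$, the node $u$ is a (not necessarily strict) descendant of $s_j$ in $T$, so $s_jP$ is an ancestor of $u$, and I would take $v = s_jP$. For the inductive step with $m > 1$, set $a_k = b_{m-1}$, so that $a_k$ is an ancestor of $a_j$ in $\mathcal{N}$ via a chain of length $m-1$ and $\mathcal{N}(a_k) = a_i$. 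The inductive hypothesis applied to $u$ and the pair $(a_j, a_k)$ yields an ancestor $w$ of $u$ in $T$ with $\delta_\mathcal{P}(w) = a_k$; applying the base case to $w$ and the pair $(a_k, a_i)$ then produces an ancestor $v$ of $w$ in $T$ with $\delta_\mathcal{P}(v) = a_i$, and transitivity of ancestry in $T$ gives that $v$ is an ancestor of $u$, as required.

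The crux of the argument is the base case, and specifically the observation that every node of $S_{\mathcal{P},j}$ is a descendant in $T$ of the subtree's root $s_j$. This is the step where the connectedness requirement built into the definition of $\Omega^T$ does all the work; without it, one could imagine a node $u$ with $\delta_\mathcal{P}(u) = a_j$ but lying off the $T$-subtree rooted at $s_j$, so that $s_jP$ would fail to be an ancestor of $u$ and the chaining would break. Once the base case is in place, the remainder of the proof is a formal iteration of the identity $\delta_\mathcal{P}(s_b P) = \mathcal{N}(a_b)$ along the transmission path from $a_j$ to $a_i$, so I do not expect further obstacles.
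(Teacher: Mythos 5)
Your proof is correct and follows essentially the same route as the paper's: induction on the length of the transmission chain from $a_j$ to $a_i$, with the base case resting on the definition of $z$ (the parent of the root of $S_{\mathcal{P},j}$ maps to $\mathcal{N}(a_j)$) together with the fact that connectedness forces every node of $S_{\mathcal{P},j}$ to be a descendant of its root. The only cosmetic difference is that you use ordinary induction via the specific host $b_{m-1}$ where the paper uses strong induction on an arbitrarily chosen intervening host; your write-up is, if anything, the more careful of the two.
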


\begin{proof}
Strong induction on the number $n$ of intervening hosts between $a_i$ and $a_j$ in $\mathcal{N}$. If $n=0$, this is true by definition of $u$, as the node $r_{h_2}$ is an ancestor of $i$ and its parent maps to $h_1$.  If the lemma is true for all $n\leq m$ and the set of intervening hosts has size $m+1$, let $a_k$ be an arbitrary member of that set. The number of intervening hosts between $a_k$ and $a_j$ in $\mathcal{N}$ is less than $m+1$, so $i$ has an ancestor $v$ in $T$ with $\mathcal{P}(k)=a_k$. The number of intervening hosts between $a_i$ and $a_k$ in $\mathcal{N}$ is also less than $m+1$, so $v$ has an ancestor $w$ in $T$ with $\mathcal{P}(k)=a_i$. It follows that $w$ is the ancestor of $u$ that we need.
\end{proof}

\begin{proposition}\label{notsurj}
$z$ is not surjective for $N>2$. 
\end{proposition}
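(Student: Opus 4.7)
The plan is to exhibit, for a phylogenetic tree $T$ with $N \geq 3$ tips, a transmission tree in $\Pi_\mathbf{A}$ that cannot lie in the image of $z$. Since $T$ is binary, it must contain at least one cherry, i.e.\ a pair of tips $a_i, a_j \in \mathbf{E}_T$ whose common parent $v \in \mathbf{I}_T$ has no other descendants. The entire argument will revolve around $v$ and the constraint that the definition of $\Omega^T$ places on its label under $\delta_\mathcal{P}$.

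The first step will be to establish a necessary condition on any $\mathcal{N}$ in the image of $z$: if $(a_i,a_j)$ is the chosen cherry, then for every $\mathcal{P} \in \Omega^T$ with $\mathcal{N} = z(\mathcal{P})$, at least one of $\mathcal{N}(a_i) = \mathcal{N}(a_j)$, $\mathcal{N}(a_i) = a_j$, or $\mathcal{N}(a_j) = a_i$ must hold. I expect to prove this by a short three-way case analysis on $\delta_\mathcal{P}(v)$. If $\delta_\mathcal{P}(v) = a_i$, the connectedness requirement forces $a_j$ into its own singleton block, so that $s_j = a_j$ and $\mathcal{N}(a_j) = \delta_\mathcal{P}(v) = a_i$; symmetrically $\delta_\mathcal{P}(v) = a_j$ yields $\mathcal{N}(a_i) = a_j$. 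If $\delta_\mathcal{P}(v) = a_l$ for some other host $a_l$, both cherry tips must sit alone in their blocks, so $\mathcal{N}(a_i) = \mathcal{N}(a_j) = a_l$.

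The second step will be to write down an explicit transmission tree violating every branch of this disjunction. Using $N \geq 3$ to choose some $a_k \in \mathbf{A} \setminus \{a_i, a_j\}$, I would define $\mathcal{N} \in \Pi_\mathbf{A}$ by $\mathcal{N}(a_j) = \emptyset$, $\mathcal{N}(a_k) = a_j$, $\mathcal{N}(a_i) = a_k$, and $\mathcal{N}(a_l) = a_j$ for every remaining host. This is a well-defined rooted tree on $\mathbf{A}$ (a ``broom'' rooted at $a_j$ with an extra edge from $a_k$ to $a_i$), and all three cherry conditions fail by inspection: $\mathcal{N}(a_i) = a_k \neq \emptyset = \mathcal{N}(a_j)$, $\mathcal{N}(a_i) \neq a_j$, and $\mathcal{N}(a_j) = \emptyset \neq a_i$. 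Hence $\mathcal{N}$ cannot equal $z(\mathcal{P})$ for any $\mathcal{P} \in \Omega^T$, and $z$ is not surjective.

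The only delicate point will be the cherry case analysis: I need to notice carefully that, whenever $\delta_\mathcal{P}(v)$ disagrees with a given cherry tip, the connectedness requirement in the definition of $\Omega^T$ forces that tip to be isolated in its own singleton partition block, so its infector can be read off directly from $\delta_\mathcal{P}(v)$. Once this observation is in place the construction above delivers the required counterexample uniformly for every admissible $T$, and Lemma~\ref{ancestors} is not needed.
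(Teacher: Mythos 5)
Your proof is correct, and it takes a genuinely different route from the paper. The paper picks three arbitrary hosts $a_i,a_j,a_k$, looks at the most recent common ancestors of their tips in $T$, and uses Lemma~\ref{ancestors} to rule out four specific ancestry patterns among the three in any transmission tree in the image of $z$. You instead localise the obstruction at a cherry: since the cherry tips are leaves whose only neighbour in $T$ is their common parent $v$, the connectedness requirement forces each cherry tip whose label disagrees with $\delta_\mathcal{P}(v)$ into a singleton block, so its infector is exactly $\delta_\mathcal{P}(v)$; the three-way case split on $\delta_\mathcal{P}(v)$ is exhaustive and gives your disjunction, and your ``broom'' tree rooted at $a_j$ with the chain $a_j\to a_k\to a_i$ violates all three branches. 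The one fact you rely on implicitly --- that every rooted binary tree with $N\geq 2$ tips has a cherry --- is standard (take an internal node of maximal depth). Your argument is more elementary and entirely self-contained, bypassing Lemma~\ref{ancestors}; the paper's version is heavier but extracts a slightly more informative structural statement (a whole family of excluded ancestry configurations among any triple of hosts, which helps explain how $|\Lambda^T_\mathbf{A}|$ varies with the topology of $T$). Either suffices for non-surjectivity.
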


\begin{proof}
For $N=2$, $|\Pi_\mathbf{A}|=2$ and $|\Omega^T|=2$ since the latter is simply the number of assignments for the single internal node of $T$ to a subgraph containing one tip or the other. The map's injectiveness ensures its surjectiveness. If $N>2$, then let $a_i,a_j,a_k\in\mathbf{A}$ be any three hosts. In $T$, $\gamma(a_i)$, $\gamma(a_j)$ and $\gamma(a_k)$ have a most recent common ancestral node $u$ and two of them, without loss of generality $\gamma(a_j)$ and $\gamma(a_k)$, have a most recent common ancestral node $v$ which is a descendant of $u$. We show that there is no element of $\Omega^T$ which will map to any member of $\Pi_\mathbf{A}$ in which any of the following are true:

\begin{itemize}
\item{$a_j$ is an ancestor of $a_i$, which is an ancestor of $a_k$.}
\item{$a_j$ is an ancestor of $a_k$, which is an ancestor of $a_i$.}
\item{$a_k$ is an ancestor of $a_i$, which is an ancestor of $a_j$.}
\item{$a_k$ is an ancestor of $a_j$, which is an ancestor of $a_i$.}
\end{itemize}

Let $\mathcal{P}$ be a partition such that $z(\mathcal{P})$ is a transmission tree in which $a_j$ is an ancestor of both $a_i$ and $a_k$. Now $\delta_{\mathcal{P}}(u)=a_j$. To see this, note that since $u$ is an ancestor of $\gamma(a_j)$, if it does not map to $a_j$ under $\delta_{\mathcal{P}}$ then neither do any of its ancestors, by connectedness. Nor do any descendants of the child of $u$ which is not an ancestor of $\gamma(a_j)$ and $\gamma(a_k)$, a set which includes $\gamma(a_i)$. All ancestors of $\gamma(a_i)$ apart from $u$ belong to one of those categories. But this contradicts lemma~\ref{ancestors} because $\gamma(a_i)$ has no ancestor which maps to $a_j$ under $\delta_\mathcal{P}$ despite the fact that $a_j$ is an ancestor of $a_i$.

Now $\gamma(a_i)$ has no ancestor in $T$ that maps to $a_k$ under $\delta_{\mathcal{P}}$, because the node $u$ breaks connectedness between $\gamma(a_k)$ and any position that such a node could be. The contrapositive of lemma~\ref{ancestors} then says that $a_k$ is not an ancestor of $a_i$. Similarly $a_i$ is not an ancestor of $a_k$. Likewise, if $z(\mathcal{P})$ is such that $a_k$ is an ancestor of both $a_i$ and $a_k$, $a_i$ is not an ancestor of $a_j$ nor vice versa.

\end{proof}

Let the image of $\Omega^T$ under $z$ be $\Lambda^T_\mathbf{A}\subseteq\Pi_\mathbf{A}$. The actual cardinality of $\Lambda^T_\mathbf{A}$ varies with the topology of $T$, which can be clearly seen in the case $N=4$ (figure~\ref{4partitions}).  

Proposition~\ref{inj} states that no two partitions of the internal nodes of $T$ correspond to the same transmission history; the set of partitions and the set of compatible transmission trees are equivalent. Proposition~\ref{notsurj} shows, however, that not every possible transmission tree on $\mathbf{A}$ actually corresponds to a partition of the nodes of a fixed $T$. If we are interested in exploring the complete space of transmission trees using this construction, we need to vary the phylogeny as well.

Let the set $\mathbf{\Omega}=\{\Omega^T:T\in\Gamma_\mathbf{A}\}$ consist of all partitions of all phylogenies with tips labelled with $\mathbf{A}$. The map $z$ can be extended to a map $Z:\mathbf{\Omega}\to\Pi_\mathbf{A}$ in the obvious way.

\begin{proposition}\label{zsurj}
$Z$ is surjective. In other words, any transmission tree on $\mathbf{A}$ arises as a partition of \emph{some} phylogenetic tree $T\in\Gamma_\mathbf{A}$.
\end{proposition}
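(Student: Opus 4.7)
The approach is to give an explicit constructive proof: given an arbitrary transmission tree $\mathcal{N} \in \Pi_\mathbf{A}$, I will build, by recursion on $\mathcal{N}$, a phylogenetic tree $T \in \Gamma_\mathbf{A}$ together with a partition $\mathcal{P} \in \Omega^T$ whose image under $z$ is $\mathcal{N}$. The recipe is: for each host $a$, replace the ``star'' formed by $a$ and its children in $\mathcal{N}$ by a small binary subphylogeny, all of whose internal nodes will go into the same block of $\mathcal{P}$ as $\gamma(a)$.

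More precisely, for a host $a$ with children $a_{c_1},\ldots,a_{c_k}$ in $\mathcal{N}$, I recursively construct phylogenetic subtrees $T_{c_1},\ldots,T_{c_k}$ from the subtrees of $\mathcal{N}$ rooted at those children. I then assemble $T_a$ by combining $\gamma(a)$ with $T_{c_1},\ldots,T_{c_k}$ using $k$ binary internal nodes in any binary branching pattern, and I put all $k$ of these internal nodes, along with $\gamma(a)$, into a single block $p_a$ of the partition. Leaves of $\mathcal{N}$ contribute $T_a = \gamma(a)$ alone, with $p_a = \{\gamma(a)\}$. Taking $T$ to be $T_{a_0}$, where $a_0$ is the root of $\mathcal{N}$, produces a rooted binary tree with $N$ labelled tips and, since each non-root host contributes exactly one internal node in its parent's assembly, a total of $\sum_{a}k_a = N-1$ internal nodes, as required for an element of $\Gamma_\mathbf{A}$.

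The verification then splits into three routine steps. First, $\mathcal{P} = \{p_a : a \in \mathbf{A}\}$ is a partition of $\mathbf{N}_T$: it covers every node (each internal node is created in the construction of some $T_a$, and each tip is $\gamma(a)$ for some unique $a$), and the blocks are disjoint by construction. Second, $\mathcal{P} \in \Omega^T$: each $p_a$ is connected in $T$ because it consists of the top ``spine'' of internal nodes used in the assembly at $a$ together with the tip $\gamma(a)$ attached to that spine, and contains exactly one tip, namely $\gamma(a)$. Third, $z(\mathcal{P}) = \mathcal{N}$: for a non-root host $a_i$ with parent $a_p = \mathcal{N}(a_i)$ in $\mathcal{N}$, the subtree $S_{\mathcal{P},i}$ is precisely what I built as $T_{a_i}$ (possibly together with nothing else, by the partition structure), so its root $s_i$ is the root of $T_{a_i}$, which in the assembly of $T_{a_p}$ is attached below one of the internal nodes I created and placed in $p_{a_p}$; hence $\delta_\mathcal{P}(s_iP) = a_p$. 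For the root host, $s_{a_0}$ is the root of $T$, so $z(\mathcal{P})(a_0) = \emptyset$, matching $\mathcal{N}$.

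I expect no serious obstacle here; the only subtlety is making the recursive construction precise enough that the three verification steps really are immediate (in particular, that when a child's subtree $T_{c_j}$ is grafted into its parent's assembly, the root of $T_{c_j}$ has a well-defined parent in $T$ lying in $p_a$, so that the formula for $z$ applies cleanly). Once the construction is stated cleanly, the only real content beyond Proposition~\ref{inj} is the counting argument showing that the assembled $T$ has the correct number of internal nodes and tips.
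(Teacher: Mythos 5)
Your construction is essentially the paper's own proof: the paper builds, for each host $a_i$ with $n_i$ children, a caterpillar of $n_i$ internal nodes ending in the tip $\gamma(a_i)$, grafts the children's subtrees onto this spine, and takes each such spine-plus-tip as a block of the partition, then checks $z(\mathcal{P})=\mathcal{N}$ exactly as you do. Your only departure is allowing an arbitrary binary branching pattern for each host's internal nodes rather than the specific chain, which still works since the internal nodes of any binary assembly form a connected set with the tip adjacent to one of them; this is a cosmetic generalization, not a different argument.
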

 
\begin{proof}

Let $\mathcal{N}\in\Pi_\mathbf{A}$. Use the following procedure to construct an element of $\mathbf{\Omega}$. If each $a_i\in\mathbf{A}$ has $n_i$ children in $\mathcal{N}$, take $n_i+1$ nodes $v_{i,1},\ldots,v_{i,n_i},v_{i,n_i+1}$. Pick an arbitrary ordering of the children of each $a_i$ and make a graph $T$ by drawing two edges from each $v_{i,k}$ to $v_{i,k+1}$ and from $v_{i,k}$ to $v_{j,1}$ where $j$ is such that $a_j$ is the $k$th child of $a_i$ in the ordering. (Notice that $v_{i,n+1}$ gets no children either way.) If $r\in\{1,\ldots,N\}$ is such that $a_r$ is the root of $\mathcal{N}$, let the root of $T$ be $v_{r,1}$.

It is clear that $T$ is a rooted binary tree, its tips are the $v_{i,n_i+1}$ and if each of these is labelled with the corresponding $a_i$ then they are in one-to-one correspondence with $\mathbf{A}$. The set of nodes $v_{i,1},\ldots,v_{i,n_i},v_{i,n_i+1}$ for each $h_i$ are by construction connected in $T$ and contain the single tip $v_{i,n_i+1}$; hence this partitioning of the nodes of $T$ is an element $\mathcal{P}$ of $\Omega^T$.  It is easily checked that $z(\mathcal{P})=\mathcal{N}$.
\end{proof} 
 
As an aside, $Z$ is not injective, as is clear from the arbitrary choice of ordering for the children of each $a_i$. (In fact, some elements of $\mathbf{\Omega}$ cannot be produced by this construction at all, for example, the bottom right example in figure~\ref{sameoldsameold}.) The upshot of proposition~\ref{zsurj} is that a MCMC procedure that fully explores the space of these partitioned phylogenies is also fully exploring the space of transmission trees amongst the elements of $\mathbf{A}$. We outline such a procedure in the next section.

So far, we have only dealt with the phylogenetic tree topology $T$. If this construction is to be useful for epidemic reconstruction, we must now consider branch lengths. Let $\mathcal{P}$ be a partition of $T$, and suppose $T$ is the topology of a genealogy $\mathcal{T}$ with length function $l$ and height function $h$. Suppose $a_i\in\mathbf{A}$ and that $z(\mathcal{P})(a_i)\neq\emptyset$. Let $u=\epsilon_\mathcal{P}(a_i)$, and let $uP$ be the parent of $u$. An infection event occurs on the branch between $uP$ and $u$, which means, assuming that internal nodes of $T$ and transmissions do not occur at exactly the same time, that it occurs at a height in the interval $(h(u),h(uP))$. In what follows it will be convenient to use a forwards timescale, so let $C:\mathbb{R}\to\mathbb{R}$ be a function converting between tree height and such a timescale (in the same units, so branch lengths are maintained). Let $t^{\textrm{inf}}_i$ be this time of infection in \emph{forwards} time. Let $q_i\in(0,1)$ be such that $t^{\textrm{inf}}_i = C(h(uP)) + q_i(C(h(u))-C(h(uP))) = C(h(uP)) + q_il(u)$. If $z(\mathcal{P})(a_i)=\emptyset$, i.e. $a_i$ is the first host in the epidemic, then $t^{\textrm{inf}}_i$ is between $C(h(r)+l(r))$ ($r$ being the root node of $T$) and $C(h(r))$ (remembering that we gave $r$ a finite branch length) we can similarly define $q_i$ such that $t^{\textrm{inf}}_i = C(h(r)+l(r)) + q_il(r)$.

The combination of a genealogy $\mathcal{T}$, partition $\mathcal{P}$ and a set of $q_i$s for all elements $a_i\in\mathbf{A}$ then entirely determines the transmission history of the epidemic, describing which host infected which others and when. No assumptions are made at this, conceptual, stage about when hosts cease to be infectious; a host can continue to infect others at any time following the time at which is sample was acquired. If, as will often be the case, this is an unreasonable assumption, the likelihood of such partitions can be evaluated to zero in the calculation of the posterior probability.

\section{MCMC procedure}

The most common methods for estimation of time-resolved phylogenies involve the use of Bayesian MCMC to sample from the probability distribution of phylogenetic trees given the available sequence data. The previous section demonstrates that, if the sequence data is such that one sample is taken from each host, such procedures can be extended to simultaneously sample from the probability distribution of reconstructed epidemics each sampled tree is augmented a partition of its nodes as well as the values of each $q_i$. We have implemented this procedure in the package BEAST \cite{drummond_bayesian_2012}. Because of the special requirements of this type of augmentation, the standard moves on the phylogenetic tree topology cannot be used. Nor are the structured tree operators developed by Vaughan et al. \cite{vaughan_efficient_2014} suitable, as those are designed for the exploration of the space of trees where every point on every branch can be freely assigned a ``type'' from a finite set. This condition is much less restrictive than than connectedness requirements that we have outlined above and the result of such a move on a tree of our type would not necessarily meet our requirements for partitions. Instead, specialised moves have been devised to alter the partitioned phylogeny in such a way that the transmission tree structure is maintained. In addition, we give an operator to alter the transmission tree while keeping the phylogenetic tree fixed, by changing node labels. 

Note that these moves do not simultaneously change the value of any of the $q_i$s, as moves on these are proposed and evaluated separately. Nevertheless, changes to either tree may involve resampling the times of infection of some hosts. If $a_i\in\mathbf{A}$, changing partition from $\mathcal{P}$ to $\mathcal{P}'$ may mean that $\epsilon_\mathcal{P}(a_i)$ and $\epsilon_{\mathcal{P}'}(a_i)$ are different nodes with different heights, and so while $q_i$ will not change, $t^{\mathrm{inf}}_i$ will. Even a move has no effect on the partition or phylogenetic tree topology, such as a change to branch lengths, may also alter the height of $\epsilon_\mathcal{P}(a_i)$ and/or its parent, which will also modify $t^{\mathrm{inf}}_i$ while $q_i$ remains fixed.

\begin{definition}
For a partition $\mathcal{P}$ of a phylogeny $\mathcal{T}$, if $u$ is a phylogenetic tree node with $\delta_\mathcal{P}(u)=a_i\in\mathbf{A}$ we say $u$ is \emph{ancestral under $\mathcal{P}$} if it is an ancestor of the only member of the subtree $S_{\mathcal{P},i}$ which is a tip of $\mathcal{T}$.
\end{definition}

\begin{definition}
For a partition $\mathcal{P}$ of a phylogeny $\mathcal{T}$, the \emph{infection branch} for $a_i\in\mathbf{A}$ is the branch of $\mathcal{T}$ ending in $\epsilon_\mathcal{P}(a_i)$.
\end{definition}

\subsection{Infection branch operator}
We randomly select a host $a_i$ that is not the first host in the outbreak (i.e. $\epsilon_\mathcal{P}(a_i)$ is not the root of $\mathcal{T}$). Consider $\epsilon_\mathcal{P}(a_i)$. The operator performs both ``downward'' and ``upward'' moves, but if $\epsilon_\mathcal{P}(a_i)$ is a tip then the move must be downwards. If it is internal, then we select upwards or downwards each with probability 0.5. Let $u=\epsilon_\mathcal{P}(a_i)$ and $uP$ be the parent of $u$ (which must exist as we avoided the root). It must be that $u$ and $uP$ are in different elements of $\mathcal{P}$, and this implies that $u$ is ancestral under $\mathcal{P}$ because the path from any node $v$ that is not a descendant of $u$ to $u$ must pass through $uP$ and if $\delta_\mathcal{P}(v)=a_i$ this would violate the connectedness requirement. Suppose $\delta_{\mathcal{P}}(uP)=a_j$.

\paragraph{Upward move} We create a new partition $\mathcal{P}'$ that has $\delta_{\mathcal{P}'}(u)=a_j$, moving the infection branch of $a_i$ up the tree. Consider the two children $uC_1$ and $uC_2$ of $u$ (as this is the upward move, $u$ is not a tip). At least one of these is mapped to the same element of $\mathbf{A}$ as $u$ by $\delta_\mathcal{P}$ because $u$ must be in the same element of $\mathcal{P}$ as the tip $\gamma\circ\delta_\mathcal{P}(u)$ and the path from $u$ to this tip in the subtree will intersect one of its children. If this is true of only one child then without loss of generality say it is $uC_1$. In this case we can simply make $\mathcal{P}'$ by setting $\delta_{\mathcal{P}'}(i)=a_j$ and leaving the rest of the partition unchanged; this is clearly still a valid partition because all subtrees remain connected. So suppose also $\delta_{\mathcal{P}}(uC_2)=\delta_{\mathcal{P}}(u)$. At most one of $uC_1$ and $uC_2$ is ancestral under $\mathcal{P}$ (as siblings, they cannot both be ancestors of the same tip) so, again without loss of generality, say it is $uC_1$. If we again set $\delta_{\mathcal{P}'}(u)=a_j$, the removal of $u$ from the subtree $S_{\mathcal{P},i}$ splits the nodes of the latter into two sets, $V_1$ containing $uC_1$ and $\gamma\circ\delta_\mathcal{P}(u)$, and $V_2$ containing $uC_2$. The nodes of both sets and the edges between them form connected subtrees of $T$, but their union is not connected.  We complete the construction of $\mathcal{P}'$ by setting $\delta_{\mathcal{P}'}(v)=a_j$ for all $v\in V_2$. $S_{\mathcal{P}',i}$ and $S_{\mathcal{P}',j}$ are then connected. 

The effect on the transmission tree is that all $a_k\in\mathbf{A}$ that have $z(\mathcal{P})(a_k)=a_i$ and $\gamma(a_k)$ a descendant of $uC_2$ have $z(\mathcal{P}')(a_k)=a_j$ instead.

\paragraph{Downward move} We create a new partition $\mathcal{P}'$ that has $\delta_{\mathcal{P}'}(uP)=a_i$, moving the infection branch of $a_i$ down the tree. We need to consider the grandparent $uG$ of $u$ if it exists, and the child $uS$ of $uP$ that is not $u$. At least one of $uG$ and $uS$ must be in the same element of $\mathcal{P}$ as $uP$ (or else $uP$ is not in a partition element containing a tip). If $uG$ does not exist then this must be $uS$.

If $\delta_\mathcal{P}(uS)=a_j$ and either $\delta_\mathcal{P}(uG)\neq a_j$ or $uG$ does not exist, then setting $\delta_{\mathcal{P}'}(uP)=a_i$ is all that is required to make $\mathcal{P}'$ a valid partition. The two or three nodes joined to $uP$ by edges were all in different elements of $\mathcal{P}$ and remain so; $uP$ was in the element of $\mathcal{P}$ containing one of its children and is moved to the one containing the other child in $\mathcal{P}'$. Similarly, if $\delta_\mathcal{P}(uG)=a_j$ and $\delta_\mathcal{P}(uS)\neq\delta_\mathcal{P}(uP)$, then $\mathcal{P}'$ then all we need do is set $\delta_{\mathcal{P}'}(uP)=a_i$; the situation is the same except that the $uP$ has moved from the element of $\mathcal{P}$ that contains of one of its children to the one containing its parent.

If $uG$ exists and $\delta_\mathcal{P}(uS)=\delta_\mathcal{P}(uG)=a_j$, then the removal of $uP$ from the subtree $S_{\mathcal{T},j}$ splits into two subtrees whose union is again not a connected subtree of $T$. Let the node sets of these two subtrees be $V_1$ and $V_2$, with $V_1$ containing $uG$ and $V_2$ containing $uS$. If $uP$ is ancestral under $\mathcal{P}$ then $V_2$ also contains the tip $\gamma(a_j)$, and if it is not then $V_1$ does. We complete $\mathcal{P}'$ by setting $\delta_{\mathcal{P}'}(uS)=a_i$ for all $v$ in the set that does not contain $\gamma(a_j)$. $S_{\mathcal{P}',i}$ and $S_{\mathcal{P}',j}$ are now connected. Note that $V_1$ may contain the root node and if it does not contain $\gamma(a_j)$ then the root's image under $\delta_\mathcal{P}$ is different from that under $\delta_{\mathcal{P}'}$, which is how this move may change the first host in the outbreak even though the root host is never chosen by the move. This can be seen in example 7) of figure~\ref{ttoperator}.

If $uP$ is not ancestral under $\mathcal{P}$, then the effect on the transmission tree is that all $a_k\in\mathbf{A}$ that have $z(\mathcal{P})(a_k)=a_j$ and $\gamma(a_k)$ a descendant of $uS$ have $z(\mathcal{P}')(a_k)=a_i$ instead. If $uP$ is ancestral under $\mathcal{P}$ then, in $z(\mathcal{P}')$, $a_i$ is the infector of $a_j$ instead of vice versa, and all $a_k\in\mathbf{A}$ that have $z(\mathcal{P})(a_k)=a_j$ and $\gamma(a_k)$ \emph{not} a descendant of $uS$ have $z(\mathcal{P}')(a_k)=a_i$ instead. 

\paragraph{Hastings ratio} We observe that:

\begin{itemize}
\item{The upward move on $u$ is reversed by the downward move on the child $uC_1$ of $u$ that is ancestral under $\mathcal{P}$. Thus the Hastings ratio is 1 if $uC_1$ is not a tip and 2 if it is.}
\item{If $uP$ is not ancestral under $\mathcal{P}$, then the downward move on $u$ is reversed by the upward move on $uP$. The Hastings ratio is 1 if $u$ is not a tip and $1/2$ if it is.}
\item{If $uP$ is ancestral under $\mathcal{P}$, then the downward move on $u$ is reversed by the downward move on its sibling $uS$. The Hastings ratio is 1 if $u$ and $uS$ are both tips or both not tips, 1/2 if $u$ is but $uS$ is not, and 2 if $uS$ is but $u$ is not. }
\end{itemize}

The various variations of this move are depicted in figure~\ref{ttoperator}. If the initial partition is that depicted as 1), the downward moves depicted as 2), 4), 7), 9), 10) and 12) involve a parent that is ancestral under $\mathcal{P}$ and 5) and 6) involve one that is not.

\subsection{Phylogenetic tree operators}

We have adapted the three standard tree moves used in BEAST (exchange, subtree slide, and Wilson-Balding \cite{wilson_genealogical_1998, drummond_estimating_2002, hohna_clock-constrained_2008}) such that they respect the transmission tree structure induced by partitioning the internal nodes. We give two versions of each:

\begin{itemize}
\item{A ``type A'' operator which does not alter the transmission tree at all; all parental relationships remain the same.}
\item{A ``type B'' operator which performs phylogenetic tree modifications which simultaneously rearrange the transmission tree by assigning new parents to one or two hosts.} 
\end{itemize}

\subsubsection{Type A operators}

\paragraph{Type A exchange}

Select a random node $u$ that is not the root $r$ of the phylogenetic tree $\mathcal{T}$, and then randomly selects a second node $v$, also not $r$ and not the sibling $uS$ of $u$, such that the parents $uP$ and $vP$ of $u$ and $v$ are in the same element of $\mathcal{P}$, $h(uP)>h(v)$, and $h(vP)>h(u)$. If there is no such $v$ then the operator fails. Otherwise, $u$ and $v$ exchange parents to obtain a new phylogenetic tree $\mathcal{T}'$ with the same partition of nodes $\mathcal{P}$. $\mathcal{P}$ is still valid in terms of connectedness, because if $\delta_\mathcal{P}(u)\neq\delta_\mathcal{P}(uP)$ then all nodes in the element of $\mathcal{P}$ containing $u$ are descendants of $u$ and the move has not affected them, whereas if $\delta_\mathcal{P}(u)=\delta_\mathcal{P}(uP)$ then changing $u$'s parent to $vP$ means that after the move it is still adjacent to a node with the same image under $\delta_\mathcal{P}$ as itself; the same goes for $v$. The transmission tree structure is unchanged: if $\delta_\mathcal{P}(u)\neq\delta_\mathcal{P}(uP)$ then $\delta_\mathcal{P}(u)$ is infected by $\delta_\mathcal{P}(uP)$ before the move and is by $\delta_\mathcal{P}(vP)=\delta_\mathcal{P}(uP)$ afterwards, whereas if $\delta_\mathcal{P}(u)=\delta_\mathcal{P}(uP)$ then $\delta_\mathcal{P}(u)$'s infection branch was not affected at all. Again, the same goes for $v$.

For the Hastings ratio, note that the partitioned tree obtained by selecting $u$ and then $v$ is exactly the same as that obtained by selecting $v$ and then $u$. If a node $w$ is selected first, let $c_\mathcal{P}(w)$ be the number of eligible nodes to be selected as the second (this is explicitly calculated every time the operator acts). The denominator of the Hastings ratio is then $\frac{1}{2N-2}(\frac{1}{c_\mathcal{P}(u)}+\frac{1}{c_\mathcal{P}(v)})$. The move is reversed by selecting the same two nodes again (in either order) hence we calculate $c_{\mathcal{P}'}(u)$ and $c_{\mathcal{P}'}(v)$ and the ratio's numerator is $\frac{1}{2N-2}(\frac{1}{c_{\mathcal{P}'}(u)}+\frac{1}{c_{\mathcal{P}'}(v)})$. Cancellation gives  $\frac{\frac{1}{c_{\mathcal{P}'}(u)}+\frac{1}{c_{\mathcal{P}'}(v)}}{\frac{1}{c_\mathcal{P}(u)}+\frac{1}{c_\mathcal{P}(v)}}$.

\paragraph{Type A subtree slide} Select a random node $u$ under the conditions that $u\neq r$ and either $u$'s grandparent $uG$ or sibling $uS$ (or both) is in the same element of $\mathcal{P}$ as its parent $uP$. Draw a distance $\Delta\in\mathbb{R}$ from some probability distribution that is symmetric about 0. We aim to change the height of $uP$ to $h(uP)+\Delta$. If $\Delta>0$, examine $uP$'s ancestors to find a node $v$ such that either $v=r$ or $h(v)<h(uP)+\Delta$ but $h(vP)>h(uP)+\Delta$; if no such ancestor exists then let $v=uS$ and this is true. If $\delta_\mathcal{P}(v)\neq\delta_\mathcal{P}(uP)$ then the move fails.  If $v=uS$ then simply change the height of $uP$ to $h(uP)+\Delta$ and the topology is unchanged. Otherwise, modify the tree such that $uP$ has height $h(uP)+\Delta$, parent $vP$ (or no parent if $v=r$ in which case $uP$ is now the root node) and child $v$, and $uS$ has parent $uG$. Again, do not change $\mathcal{P}$.  Connectedness rules are still obeyed because, in the new tree $\mathcal{T}'$, $uP$ is adjacent to $v$, which is in the same element of $\mathcal{P}$ as itself. The transmission tree structure is unchanged as:

\begin{itemize}
\item{The move does not change the partition, so any infection branches have not changed if the particular phylogenetic tree branch was not modified by the move. This applies to the branch between $u$ and $uP$ as well as all branches adjacent to nodes other than $u$, $uP$, $uG$, $uS$, $v$, and $vP$.}
\item{If $uS$ and $uP$ are in different elements of $\mathcal{P}$ then $uP$ and $uG$ are in the same one, so the infector of $\delta_\mathcal{P}(uS)$ remains the same.}
\item{If $uG$ and $uP$ are in different elements of $\mathcal{P}$ then the move fails if $h(uP)+\Delta>h(uG)$ so the phylogenetic tree topology is unchanged.}
\item{If $v$ and $vP$ are in different elements of $\mathcal{P}$ then $uP$, instead of $v$, is now the end of $\delta_\mathcal{P}(uP)$'s infection branch, but $\delta_\mathcal{P}(uP)=\delta_\mathcal{P}(v)$ and its infector is still $\delta_\mathcal{P}(vP)$.}
\end{itemize}

If $\Delta<0$, then if $h(uP)+\Delta<h(u)$ the move fails. Otherwise, we select a node $v$ at random from the set $W$ which consists of nodes $w$ that:

\begin{enumerate}
\item{Are descendants of $uP$ but not descendants of $u$.}
\item{Have $h(k)<h(uP)+\Delta$ but $h(kP)>h(uP)+\Delta$.}
\item{Have $\delta_{\mathcal{P}}(wP)=\delta_{\mathcal{P}}(uP)$.}
\end{enumerate} 

If $W$ is empty the move fails. In the case that $W$ consists only of $uS$ then simply set $h(uP)=h(uP)+\Delta$ and the topology is unchanged. Otherwise, modify the tree such that $uP$ has height $h(uP)+\Delta$, parent $vP$ and child $v$, and $uS$ has parent $uG$. connectedness rules are still obeyed because there is an edge from $uP$ to a node ($vP$) in the same element of the partition. The transmission tree structure is unchanged as:

\begin{itemize}
\item{Again, the move does not change the partition, so any infection branches have not changed if the particular phylogenetic tree branch was not modified by the move.}
\item{If $uS$ and $uP$ are in different elements of $\mathcal{P}$ then the move fails if $h(uP)+\Delta<h(uS)$ so the topology is unchanged.}
\item{If $uG$ and $uP$ are in different elements of $\mathcal{P}$ then $uP$ and $uS$ were in the same one, so the infector of $\delta_\mathcal{P}(uP)$ remains the same; $uS$ is now the end of its infection branch.}
\item{If $v$ and $vP$ are in different elements of $\mathcal{P}$ then the infector of $\delta_\mathcal{P}(v)$ is still $\delta_\mathcal{P}(vP)=\delta_\mathcal{P}(uP)$.}
\end{itemize}

Suppose there are $d_{\mathcal{T}}$ nodes eligible for this move before it occurs and $d_{\mathcal{T}'}$ afterwards. If the topology did not change then the Hastings ratio is $\frac{d_{\mathcal{T}'}}{d_{\mathcal{T}}}$. Otherwise, it is $\frac{|W|d_{\mathcal{T}'}}{d_{\mathcal{T}}}$ if $\Delta<0$ and $\frac{d_{\mathcal{T}'}}{|W'|d_{\mathcal{T}}}$ if $\Delta>0$, where the $W'$ is the set of nodes $w$ that:

\begin{enumerate}
\item{Are descendants of $vP$ (in the original tree) but not descendants of $u$.}
\item{Have $h(w)<h(uP)$ but $h(wP)>h(uP)$.}
\item{Have $\delta_{\mathcal{P}}(wP)=\delta_{\mathcal{P}}(v)$.}
\end{enumerate} 

\paragraph{Type A Wilson-Balding move} Pick a node $u$ under the same conditions as for the type A subtree slide. Pick a second node $v$ at random from amongst all nodes that are in the same element of $\mathcal{P}$ as $uP$, or whose parents are, and such that $h(vP)>h(u)$. The move fails if $uP=vP$, or $v=uP$. The node $uP$ is pruned and reattached as a child of $vP$ and the parent of $v$ as with the standard Wilson-Balding move \cite{wilson_genealogical_1998, drummond_estimating_2002}. As before, do not change $\mathcal{P}$. Connectedness rules are obeyed because there is an edge from $uP$ to a node (either $v$ or $vP$) in the same element of $\mathcal{P}$ as itself. The transmission tree structure is unchanged because if there was an infection event between $uG$ and $uC$ (and there was at most one by construction) then there still is and it involves the same hosts, and likewise if there was one between $vP$ and $v$ then there still is and it involves the same hosts. If there was no infection event in either case then the removal or insertion of $uP$ does not add one.

Notice that if $u$ is subsequently selected for this move again, then the set of candidates for the second node is the same except that it excludes the original $v$ and includes the original $uG$; in particular it has the same cardinality, as it did for the standard Wilson-Balding move. So only the choice of first node affects the Hastings ratio. It follows that this is the ratio from the standard Wilson-Balding move multiplied by $\frac{e_{\mathcal{T}}}{e_{\mathcal{T}'}}$, where $e_{\mathcal{T}}$ is the number of nodes eligible for this move before it occurs and $e_{\mathcal{T}'}$ is the number afterwards.

\subsubsection{Type B operators}

\paragraph{Type B exchange}

Select a random node $u$, not $r$, whose parent $uP$ is in a different element of $\mathcal{P}$ to itself. Pick a second node $v$, also not $r$ and not $uS$, whose parent $uP$ is also in a different element of $\mathcal{P}$ to itself (but this time the elements containing $uP$ and $vP$ do not have to be the same), such that $h(uP)>h(v)$, and $h(vP)>h(u)$. If there is no such $v$ then the operator fails. Otherwise, $u$ and $v$ exchange parents as with the type A operator. That it preserves connectedness of subtrees is clear. The effect on the transmission tree is that $\delta_\mathcal{P}(u)$ and $\delta_\mathcal{P}(v)$ exchange parents (if their parents are different). 

The Hastings ratio is calculated in effectively the same way as for the type A version, noting that the number of choices for $u$ is just $N-1$. If $f_\mathcal{P}(w)$ is the number of eligible choices for a second node if $w$ is chosen first, then the ratio is $\frac{\frac{1}{f_{\mathcal{P}'}(u)}+\frac{1}{f_{\mathcal{P}'}(v)}}{\frac{1}{f_\mathcal{P}(u)}+\frac{1}{f_\mathcal{P}(v)}}$.
 
\paragraph{Type B subtree slide} This time, $u$ is a random node whose parent exists and is in a different element of $\mathcal{P}$ to itself. This implies that $uP$ is in the same element as either $uS$ or $uG$ (if the latter exists) because otherwise $uP$ would not be in a partition element containing a tip. The operator performs the standard subtree slide move \cite{hohna_clock-constrained_2008} on $u$, inserting $uP$ as the parent of another node $v$ and (if $v$ was not the root node), the child of $vP$.  $\mathcal{P}$ is changed to a new partition $\mathcal{P}'$ as follows: if $vP$ does not exist or $v$ and $vP$ are in the same element of $\mathcal{P}$, $uP$ is moved to the element containing $v$. Otherwise, it is moved to either the element containing $v$ or that containing $vP$ with equal probability. This reallocation is enough to ensure that $\mathcal{P}'$ obeys connectedness rules. The effect on the transmission tree is that $\delta_\mathcal{P}(u)$ is moved to become a child of either $\delta_\mathcal{P}(v)$ or $\delta_\mathcal{P}(vP)$. If $\delta_\mathcal{P}(uS)\neq\delta_\mathcal{P}(uG)$ then $\delta_\mathcal{P}(uS)$ was the child of $\delta_\mathcal{P}(uG)$ before the move and remains so.

Noting that there are always $N-1$ choices for $u$, the Hastings ratio is the same as the standard subtree slide move, except that the denominator is multiplied by $\frac{1}{2}$ if $vP$ exists and $v$ and $vP$ are not in the same element of $\mathcal{P}$, and the numerator is multiplied by $\frac{1}{2}$ if $uG$ exists and $uG$ and $uS$ are not in the same element of $\mathcal{P}$.

\paragraph{Type B Wilson-Balding move}

In a similar way, $u$ is randomly picked from the set of nodes whose parents exist and are in different subtrees to themselves, and the standard Wilson-Balding move is performed on it, inserting $uP$ as a parent of another node $v$ and a child of its parent if that exists. The reassignment of $uP$ to a new subtree is performed in the same was as for type B subtree slide, and the adjustment to the Hastings ratio is identical. The effect on the transmission tree is also the same.

\subsection{Irreducibility of the chain}

Suppose $\mathcal{P}$ is a partition of a phylogeny $\mathcal{T}$ with root node $r$. First, notice the following about the infection branch operator described above:

\begin{itemize}
\item{For any $a_i\in\mathbf{A}$, if $\delta_\mathcal{P}(r)\neq a_i$, a series of downward moves, starting with one on $\epsilon_\mathcal{P}(a_i)$, eventually results in a new partition $\mathcal{P}'$ which has $\delta_{\mathcal{P}'}(r)=a_i$.}
\item{If $\delta_\mathcal{P}(r)=a_i$, a series of upward moves on $\epsilon_\mathcal{P}(a_j)$ for all $a_j\neq a_i$ will eventually give a partition $\mathcal{P}'$ in which $\delta_{\mathcal{P}'}(u)=a_i$ for all internal nodes $u$ of $\mathcal{T}$. As all such moves are reversible, we can get from $\mathcal{P}'$ to any partition $\mathcal{P}''$ that has $\delta_{\mathcal{P}''}(r)=a_i$.}
\end{itemize}

The above demonstrates that a MCMC chain made up of these moves on the space of partitions of a single phylogeny is irreducible. If $\mathcal{P}$ and $\mathcal{P}'$ are two partitions such that $\delta_\mathcal{P}(r)=\delta_{\mathcal{P}'}(r)$ then there is a series of moves taking $\mathcal{P}$ to $\mathcal{P}'$, and if $\delta_\mathcal{P}(r)\neq\delta_{\mathcal{P}'}(r)$ then there is a series of moves taking $\mathcal{P}$ to a partition $\mathcal{P}''$ that has $\delta_{\mathcal{P}''}(r)=\delta_{\mathcal{P}'}(r)$ and then a series of moves taking $\mathcal{P}''$ to $\mathcal{P}'$.

To extend this to a variable phylogenetic tree, we use the fact that in the space of standard, unpartitioned phylogenies, the Wilson-Balding move on its own is sufficient for irreducibility \cite{drummond_estimating_2002}. Suppose $\mathcal{P}$ is a partition of $\mathcal{T}$ such that $\delta_{\mathcal{P}'}(u)=\delta_\mathcal{P}(r)$ for all internal nodes $u$ of $\mathcal{T}$. Now every node of $\mathcal{T}$ is eligible to be the first node chosen by the type A Wilson-Balding move, as is true with the standard Wilson-Balding move on an unpartitioned tree, and subsequently, the set of nodes that is eligible to the the second node chosen is the same for both moves too. In addition, after this move, the new tree is still partitioned such that all internal nodes are in the same element of the partition. As a result, every move on an unpartitioned phylogeny that can be made by the standard move is also possible on the space of partitioned phylogenies with all internal nodes in the same partition element as a unique tip. Hence the chain is irreducible under the type A move when restricted to phylogenies with partitions of this type, and we have already shown that the infection branch operator is sufficient to move from a partition of this type to any other partition of the same tree. This is sufficient to establish irreducibility on the entire space of partitioned phylogenies using just these two moves.

\section{Bayesian decomposition}

Having established the correspondence between partitioned phylogenetic trees and transmission trees, we now show how the likelihood of such a partitioned phylogeny can be calculated given models of between-host transmission dynamics, of the duration of the infection within each host, of the population dynamics of the ``agents'' (which can be taken to be pathogens or infected individuals) within each host, and of sequence evolution. 

In contrast to the previous work of Didelot et al. \cite{didelot_bayesian_2014}, whose underlying model of transmission was a compartmental SIR model, we use an individual-based model similar to those employed in previous work on agricultural outbreaks \cite{cottam_integrating_2008, morelli_bayesian_2012, ypma_unravelling_2011, ypma_relating_2013}. This much more readily allows for the accommodation of host heterogeneity, and makes no assumption of random mixing. Instead, the force of infection of a host $a_i$ on another $a_j$ is given by a basic transmission rate $\beta$ multiplied by a positive real number $d(h_1,h_2)$ from a function $d:\mathbf{A}\times\mathbf{A}\rightarrow[0,\infty)$ describing some relationship between $a_i$ and $a_j$. Possible choices for $d$ are a spatial kernel function, a network metric, or a function modifying $\beta$ based on shared membership in some class of host.

As in previous work \cite{ypma_relating_2013, didelot_bayesian_2014} we take the model of the dynamics of the ``agents'' to be a coalescent process amongst lineages in a freely-mixing population within each host. If the hosts are single organisms, the agents will naturally be individual pathogens. If, on the other hand, the hosts are infected locations, they could instead be considered to be infected organisms. In either case, only a miniscule proportion of the total agent population are represented by lineages in the tree, and the assumption of a low sampling fraction required for use of the coalescent process is satisfied.

We use the following notation:

\begin{itemize}
\item{The sequence data, $D$}
\item{The phylogenetic tree, $\mathcal{T}$}
\item{The transmission tree structure, $\mathcal{N}$}
\item{The set $\mathbf{T}^{\mathrm{inf}}$ of times of infection of each host}
\item{The times of sampling $\mathbf{T}^{\mathrm{exam}}$ of the sequence from each host}
\item{The times of becoming noninfectious $\mathbf{T}^{\mathrm{end}}$ of each host.}
\item{Data $L$ describing the relationship between hosts that is used to define the function $d$ (for example, spatial locations).}
\item{The basic transmission rate $\beta$.}
\item{The parameters $\phi$ of the distance function $d$.}
\item{The parameters $\psi$ of the population dynamics of the agents within each host.}
\item{The parameters $\omega$ of the nucleotide substitution model and molecular clock.}
\end{itemize}

We condition on $\mathbf{T}^{\mathrm{exam}}$, $\mathbf{T}^{\mathrm{end}}$, and $L$. We assume that $\mathbf{T}^{\mathrm{exam}}$ and $\mathbf{T}^{\mathrm{end}}$ are not contradictory; no sample was taken after a host became noninfectious. $\mathbf{T}^{\mathrm{end}}$ can be the same set of times as $\mathbf{T}^{\mathrm{exam}}$, or a separate set of later times. If any or all hosts are known to have remained infectious indefinitely, their values of $\mathbf{T}^{\mathrm{end}}$ can be set to the time at which the last sample was taken.

The posterior probability we are interested in calculating is $p(\mathcal{T}, \mathcal{N}, \mathbf{T}^{\mathrm{inf}}, \beta, \phi, \psi, \omega | D, \mathbf{T}^{\mathrm{exam}}, \mathbf{T}^{\mathrm{end}}, L)$. By Bayes' Theorem this is equal to:
\begin{eqnarray*}
\frac{p(D|\mathcal{T}, \mathcal{N}, \mathbf{T}^{\mathrm{inf}}, \beta, \phi, \psi, \omega, \mathbf{T}^{\mathrm{exam}}, \mathbf{T}^{\mathrm{end}}, L)p(\mathcal{T}, \mathcal{N}, \mathbf{T}^{\mathrm{inf}}, \beta, \phi, \psi, \omega, \mathbf{T}^{\mathrm{exam}}, \mathbf{T}^{\mathrm{end}}, L)}{p(D|\mathbf{T}^{\mathrm{exam}}, \mathbf{T}^{\mathrm{end}}, L)}
\end{eqnarray*}

As usual, we need not calculate the denominator if we are uninterested in model comparison as it does not vary. We assume that mutations occur neutrally over the the phylogenetic tree in a process that ignores the host structure, so $D$ depends only on $\mathcal{T}$ and $\omega$ and the likelihood reduces to $p(D|\mathcal{T},\omega)$, which can be calculated using the Felsenstein pruning algorithm and a molecular clock model in the normal way \cite{felsenstein_evolutionary_1981, drummond_estimating_2002, drummond_relaxed_2006}. It remains to calculate the prior probability $p(\mathcal{T}, \mathcal{N}, \mathbf{T}^{\mathrm{inf}}, \phi, \psi, \omega, \mathbf{T}^{\mathrm{exam}}, \mathbf{T}^{\mathrm{end}}, L)$. The full decomposition is as follows:
\begin{eqnarray*}
p(\mathcal{T}, \mathcal{N}, \mathbf{T}^{\mathrm{inf}}, \phi, \psi, \omega, \mathbf{T}^{\mathrm{exam}}, \mathbf{T}^{\mathrm{end}}, L) &=& p(\beta|\mathcal{T}, \mathcal{N}, \mathbf{T}^{\mathrm{inf}}, \phi, \psi, \omega, \mathbf{T}^{\mathrm{exam}}, \mathbf{T}^{\mathrm{end}}, L)\\
&& \times p(\mathcal{T} | \mathcal{N}, \mathbf{T}^{\mathrm{inf}}, \phi, \psi, \omega, \mathbf{T}^{\mathrm{exam}}, \mathbf{T}^{\mathrm{end}}, L)\\
&& \times p(\mathcal{N} | \mathbf{T}^{\mathrm{inf}}, \phi, \psi, \omega, \mathbf{T}^{\mathrm{exam}}, \mathbf{T}^{\mathrm{end}}, L)\\
&& \times p(\mathbf{T}^{\mathrm{inf}} | \phi, \psi, \omega, \mathbf{T}^{\mathrm{exam}}, \mathbf{T}^{\mathrm{end}}, L)\\
&& \times p(\phi, \psi, \omega | \mathbf{T}^{\mathrm{exam}}, \mathbf{T}^{\mathrm{end}}, L)
\end{eqnarray*}

The following assumptions of independence are then made:

\begin{itemize}
\item{All parameters in the decomposition are independent of $\omega$.}
\item{$\beta$, the base transmission rate, is (at least) conditionally independent of $\mathcal{T}$ and $\psi$ given $\phi$, $\mathcal{N}$, $\mathbf{T}^{\mathrm{exam}}$, $\mathbf{T}^{\mathrm{inf}}$, $\mathbf{T}^{\mathrm{end}}$, and $L$. This is intuitive given that the latter set of parameters completely describe the epidemic and the distance-based modification of $\beta$.}
\item{$\mathcal{T}$, the phylogenetic tree, is (at least) conditionally independent of $\phi$, $\mathbf{T}^{\mathrm{end}}$, and $L$  given $\psi$, $\mathcal{N}$, $\mathbf{T}^{\mathrm{inf}}$, and $\mathbf{T}^{\mathrm{exam}}$.}
\item{$\mathcal{N}$, the transmission tree structure, is (at least) conditionally independent of $\mathbf{T}^{\mathrm{exam}}$ and $\psi$ given $\phi$, $\mathbf{T}^{\mathrm{inf}}$, $\mathbf{T}^{\mathrm{end}}$ and $L$.}
\item{$\mathbf{T}^{\mathrm{inf}}$, the times of infection, is (at least) conditionally independent of $\phi$, $\mathbf{T}^{\mathrm{exam}}$, $\psi$ and $L$ given $\mathbf{T}^{\mathrm{end}}$.}
\item{$\phi$, $\psi$ and $\omega$ are independent of $\mathbf{T}^{\mathrm{inf}}$, $\mathbf{T}^{\mathrm{end}}$, $L$, and each other.}
\end{itemize}

The decomposition then reduces to:

\begin{eqnarray*}
p(\mathcal{T}, \mathcal{N}, \mathbf{T}^{\mathrm{inf}}, \phi, \psi, \omega, \mathbf{T}^{\mathrm{exam}}, \mathbf{T}^{\mathrm{end}}, L) &=& p(\beta|\mathcal{N}, \mathbf{T}^{\mathrm{inf}}, \phi, \mathbf{T}^{\mathrm{exam}}, \mathbf{T}^{\mathrm{end}}, L)\\
&& \times p(\mathcal{T} | \mathcal{N}, \mathbf{T}^{\mathrm{inf}}, \psi, \mathbf{T}^{\mathrm{exam}})\\
&& \times p(\mathcal{N} | \mathbf{T}^{\mathrm{inf}}, \phi, \mathbf{T}^{\mathrm{end}}, L)\\
&& \times p(\mathbf{T}^{\mathrm{inf}} |  \mathbf{T}^{\mathrm{end}})\\
&& \times p(\phi)p(\psi)p(\omega)
\end{eqnarray*}

For calculation of $p(\beta|\mathcal{N}, \mathbf{T}^{\mathrm{inf}}, \phi, \mathbf{T}^{\mathrm{exam}}, \mathbf{T}^{\mathrm{end}}, L)$, we use Bayes' Theorem again: 

\begin{eqnarray*}
p(\beta|\mathcal{N}, \mathbf{T}^{\mathrm{inf}}, \phi, \mathbf{T}^{\mathrm{exam}}, \mathbf{T}^{\mathrm{end}}, L)&=&\frac{p(\mathcal{N}, \mathbf{T}^{\mathrm{inf}}|\beta, \phi, \mathbf{T}^{\mathrm{exam}}, \mathbf{T}^{\mathrm{end}}, L)p(\beta |\phi, \mathbf{T}^{\mathrm{exam}}, \mathbf{T}^{\mathrm{end}}, L)}{p(\mathcal{N},\mathbf{T}^{\mathrm{inf}}|\phi,\mathbf{T}^{\mathrm{exam}}, \mathbf{T}^{\mathrm{end}}, L)}
\end{eqnarray*}

The denominator $p(\mathcal{N},\mathbf{T}^{\mathrm{inf}}|\phi,\mathbf{T}^{\mathrm{exam}}, \mathbf{T}^{\mathrm{end}}, L)$ can be evaluated as
\begin{eqnarray*}
\int_\beta p(\mathcal{N}, \mathbf{T}^{\mathrm{inf}}|\beta, \phi, \mathbf{T}^{\mathrm{exam}}, \mathbf{T}^{\mathrm{end}}, L)p(\beta |\phi, \mathbf{T}^{\mathrm{exam}}, \mathbf{T}^{\mathrm{end}}, L)d\beta
\end{eqnarray*}

by the law of total probability. This will not in general have a closed form solution and we use numerical integration to estimate it. The term $p(\beta |\phi,\mathbf{T}^{\mathrm{exam}}, \mathbf{T}^{\mathrm{end}}, L)$ is our prior belief in the value of $\beta$ given $\phi$ and the background information; in the absence of other information we take $\beta$ to be independent of these and simply give it any prior distribution $p(\beta)$ that we please.

It remains to calculate $p(\mathcal{N}, \mathbf{T}^{\mathrm{inf}}|\beta, \phi, \mathbf{T}^{\mathrm{exam}}, \mathbf{T}^{\mathrm{end}}, L)$.  The calculation here is along the same lines of that introduced by Gibson and Austin \cite{gibson_fitting_1996}, but heavily modified. Given a particular set of $\mathbf{T}^{\mathrm{inf}}$, we reorder the indexes of $\mathbf{A}$ to be in increasing order of infection. As before, the infection time of $a_i$ is $t^{\mathrm{inf}}_i$. The probability that $a_1$ was infected at time $t^{\mathrm{inf}}_1$ given that it was first in the epidemic is effectively unknowable and we set it to 1. For notational simplicity now treating $\mathcal{N}$ as a map from the index of a case to the index of its infector, we need the probability that $a_{\mathcal{N}(i)}$ infected $a_i$ at $t^{\mathrm{inf}}_i$, which is made up of:

\begin{itemize}
\item{The probability that $a_{\mathcal{N}(i)}$ infected $a_i$ at $t^{\mathrm{inf}}_i$, but not before:
\begin{eqnarray*}
\beta d(a_i,a_{\mathcal{N}(i)})\times\mathrm{exp}\left(-\beta d(a_i,a_{\mathcal{N}(i)})(t^{\mathrm{inf}}_i-t^{\mathrm{inf}}_{\mathcal{N}(i)})\right)
\end{eqnarray*}
}
\item{The probability that no other host infected $a_i$ before $t^{\mathrm{inf}}_i$. As we assume no reinfection, infection events that would occur after this time are ignored. Noting that the last possible time that an $a_j$ could have infected $a_i$ for this to be true is the smaller of $t^{\mathrm{inf}}_i$ and the end of of $a_j$'s infectiousness, $t^{\mathrm{end}}_j$, this is given by:
\begin{eqnarray*}
\prod_{j\in\{1,\ldots,i-1\}\setminus \mathcal{N}(i)}\mathrm{exp}\left({-\beta d(a_i,a_j)(\textrm{min}\{t^{\mathrm{inf}}_i,t^{\mathrm{end}}_j\}-t^{\mathrm{inf}}_j)}\right)
\end{eqnarray*}
}
\item{As we are conditioning on $\mathbf{T}^{\mathrm{exam}}$ and $\mathbf{T}^{\mathrm{end}}$, we implicitly assume that each $a_i$ was, in fact, infected and was infected before its time of sampling $t^{\mathrm{exam}}_i$. As a result, we need to normalise by the probability that an infection did happen before this date, which is one minus the probability that none did. The last possible time that an $a_j$ could have infected $a_i$ at all is the smaller of $t^{\mathrm{exam}}_i$ and $t^{\mathrm{end}}_j$, so this expression is:
\begin{eqnarray*}
1-\prod_{\substack{j\in\{1,\ldots,N\}\\t^{\mathrm{inf}}_j<t^{\mathrm{exam}}_i}} \mathrm{exp}\left({-\beta d(a_i,a_j)(\textrm{min}\{t^{\mathrm{exam}}_i,t^{\mathrm{end}}_j\}-t^{\mathrm{inf}}_j)}\right)
\end{eqnarray*}
}
\end{itemize}

Thus the full expression for $p(\mathcal{N}, \mathbf{T}^{\mathrm{inf}}|\beta, \phi, \mathbf{T}^{\mathrm{exam}}, \mathbf{T}^{\mathrm{end}}, L)$ is:
\begin{eqnarray*}
\mathlarger{\prod}_{i\in\{2,\ldots,N\}}\left(\frac{\beta d(a_i,a_{\mathcal{N}(i)})\prod_{j\in\{1,\ldots,i-1\}}\mathrm{exp}\left(-\beta d(a_i,a_j)(\textrm{min}\{t^{\mathrm{inf}}_i,t^{\mathrm{end}}_j\}-t^{\mathrm{inf}}_j)\right)}{1-\prod_{\substack{j\in\{1,\ldots,N\}\\t^{\mathrm{inf}}_j<t^{\mathrm{exam}}_i}} \textrm{exp}\left({-\beta d(a_i,a_j)(\textrm{min}\{t^{\mathrm{exam}}_i,t^{\mathrm{end}}_j\}-t^{\mathrm{inf}}_j)}\right)}\right)
\end{eqnarray*}

If one of the $d(a_i,a_j)$ terms in this expression is zero, then the whole thing is zero, representing an impossible transmission history. Otherwise this is undefined for $\beta=0$ but exists and is positive for all other $\beta\in(0,\infty)$, as the denominator is always greater than zero. Let this expression, as a function of $\beta$ alone with all other variables constant, be $I(\beta)$. The integral $\int_0^\infty I(\beta)p(\beta)d\beta$, where $p(\beta)$ is the prior probability of $\beta$, can estimated by numerical methods if we show that it is in fact finite.

\begin{proposition}
Let $a,b\in(0,\infty)$. The improper integral $\int_a^b I(\beta)d\beta$ converges as $a\to0$ and $b\to\infty$.
\end{proposition}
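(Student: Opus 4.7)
The plan is to rewrite $I(\beta)$ as a product of elementary factors and analyse the two improper endpoints separately. For each $i \in \{2,\ldots,N\}$, set $d_i = d(a_i,a_{\mathcal{N}(i)})$ together with
\[
A_i \;=\; \sum_{j=1}^{i-1} d(a_i,a_j)\bigl(\min\{t^{\mathrm{inf}}_i,t^{\mathrm{end}}_j\}-t^{\mathrm{inf}}_j\bigr),\qquad
B_i \;=\; \sum_{\substack{j\in\{1,\ldots,N\}\\ t^{\mathrm{inf}}_j<t^{\mathrm{exam}}_i}} d(a_i,a_j)\bigl(\min\{t^{\mathrm{exam}}_i,t^{\mathrm{end}}_j\}-t^{\mathrm{inf}}_j\bigr),
\]
so that $I(\beta) = \prod_{i=2}^N \frac{\beta d_i e^{-\beta A_i}}{1-e^{-\beta B_i}}$. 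If any $d_i = 0$ then $I\equiv 0$ and convergence is trivial, as the text already notes; otherwise $d_i>0$, and the contribution of the single index $j=\mathcal{N}(i)$ already forces $A_i,B_i>0$, since $\mathcal{N}(i)$ precedes $i$ in the infection ordering and must still be infectious at $t^{\mathrm{inf}}_i$.

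Near $\beta=0$ I would use the Taylor expansion $1-e^{-\beta B_i} = \beta B_i + O(\beta^2)$, so each factor in the product extends continuously to $\beta=0$ with value $d_i/B_i \in (0,\infty)$. Hence $I$ itself extends continuously to $[0,b]$ for every finite $b>0$, and $\int_a^b I(\beta)\,d\beta$ has a finite limit as $a\to 0^+$. For $\beta\to\infty$ the denominators $1-e^{-\beta B_i}$ increase to $1$, so there is some $\beta_0>0$ beyond which every denominator exceeds $1/2$. Writing $A=\sum_{i=2}^N A_i > 0$ and $D=\prod_{i=2}^N d_i$, this yields the crude but sufficient bound
\[
0 \;\le\; I(\beta) \;\le\; 2^{N-1} D\, \beta^{N-1}\, e^{-A\beta} \qquad (\beta\ge\beta_0),
\]
whose right-hand side is integrable on $[\beta_0,\infty)$.

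Combining the two estimates, $\int_a^b I(\beta)\,d\beta$ admits a finite limit as $a\to 0^+$ and $b\to\infty$, which is what is claimed. The only mildly delicate point is the verification that $A_i$ and $B_i$ are strictly positive; everything else is a routine continuous extension near the origin followed by a polynomial-times-exponential tail bound. Since the positivity reduces entirely to recording the contribution of the infector $\mathcal{N}(i)$ inside each of the two sums, I do not expect a genuine obstacle beyond this bookkeeping.
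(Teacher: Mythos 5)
Your proof is correct and follows essentially the same route as the paper's: boundedness of each factor near $\beta=0$ (you via the Taylor expansion of $1-e^{-\beta B_i}$, the paper via l'H\^{o}pital's rule applied to each factor), followed by a polynomial-times-decaying-exponential control of the tail (you by direct domination once every denominator exceeds $1/2$, the paper by the limit comparison test against $A\beta^{n-1}e^{-B\beta}$ and explicit integration by parts). Your explicit check that the exponents $A_i$ and $B_i$ are strictly positive, via the contribution of the index $j=\mathcal{N}(i)$, addresses a point the paper merely asserts and is worth retaining.
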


\begin{proof}
For the lower limit, we use:

\begin{lemma}\label{lop}
Suppose $A$ and $B$ are positive real numbers. Then:
\begin{eqnarray*}
\lim_{x\to0} \frac{xe^{-Ax}}{1-e^{-Bx}}=\frac{1}{B}
\end{eqnarray*}
\end{lemma}

\begin{proof}
Let $f(x)=xe^{-Ax}$ and $g(x)=1-e^{-Bx}$. Then $f'(x)=(1-Ax)e^{-Ax}$ and $g'(x)=Be^{-Bx}$. Hence:

\begin{eqnarray*}
\frac{f'(x)}{g'(x)}&=&\frac{(1-Ax)e^{-Ax}}{Be^{-Bx}}\\
&=& \frac{1-Ax}{B}e^{(B-A)x}\\
\end{eqnarray*}
This shows that $\lim_{x\to0}f'(x)/g'(x)=1/B$ and the result follows by l'H\^{o}pital's rule.

\end{proof}

Lemma~\ref{lop} shows that each individual term in the product that makes up $I(\beta)$ does not have 0 as an asymptote, hence they are all bounded on $(0,a]$ as they clearly have no others. Hence, on this interval, $I(\beta)$, as the product of bounded functions, is bounded and the integral converges.

For the upper limit, we can write:

\begin{eqnarray*}
I(\beta) = \frac{A\beta^{n-1}\textrm{exp }(-B\beta)}{\prod_{i=2}^N(1-\textrm{exp }(-C_i\beta))}\\
\end{eqnarray*}

where $A$, $B$ and each $C_i$ is a positive real number. If we let $J(\beta)=A\beta^{n-1}\textrm{exp }(-B\beta)$ then $J(\beta)/I(\beta)=\prod_{i=2}^N(1-\textrm{exp }(-C_i\beta))$ whose limit as $\beta\to\infty$ is 1. The limit comparison test then says that $\int_a^bI(\beta)d\beta$ converges as $b\to\infty$ if and only if $\int_a^bJ(\beta)d\beta$ does. Recursive integration by parts gives:

\begin{eqnarray*}
\int_a^bJ(\beta)d\beta  = \left[ \frac{A}{B}\left(\sum_{k=0}^{n-1}\left(\frac{-1}{B}\right)^{n-1-k}\beta^{k}\right)\textrm{exp }(-B\beta)\right]_a^b\\
\end{eqnarray*}

$\int_a^\infty J(\beta)d\beta = \lim_{b\to\infty}\int_a^b J(\beta)d\beta$, and $\int_a^\infty J(\beta)d\beta$ can thus be expressed as a constant expression involving $a$, plus the sum of $n$ limits of the form $\lim_{\beta\to\infty}D\beta^k\textrm{exp }(-B\beta)$ where $D$ is a constant and $k\in\mathbb{N}$. It is a standard result that each of these is 0. Hence $J(\beta)$ converges and so does $I(\beta)$.

\end{proof}

\begin{corollary}
Let $a,b\in(0,\infty)$. If $p(\beta)$ is a proper prior distribution whose support is a subset of $(0,\infty)$, the improper integral $\int_a^b I(\beta)p(\beta)d\beta$ converges as $a\to0$ and $b\to\infty$.
\end{corollary}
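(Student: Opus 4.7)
The plan is to upgrade the previous proposition from integrability of $I$ against Lebesgue measure to integrability against the prior density $p(\beta)$, by showing that $I$ is actually \emph{bounded} on $(0,\infty)$ and then exploiting the fact that $p$ is a probability density.

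First I would verify boundedness of $I(\beta)$ on $(0,\infty)$. The function $I$ is continuous on $(0,\infty)$, since it is a ratio of smooth positive functions whose denominator $\prod_{i=2}^N(1-\exp(-C_i\beta))$ does not vanish for $\beta>0$. Near $\beta=0$, Lemma~\ref{lop} shows that each factor of $I(\beta)$ (once paired with one factor $\beta$ from the numerator $\beta^{N-1}$) has a finite positive limit, so $I$ extends continuously to $0$ with a finite value. Near $\beta=\infty$, I would reuse the expression $I(\beta)=J(\beta)/\prod_{i=2}^N(1-\exp(-C_i\beta))$ from the preceding proof: the denominator tends to $1$, and $J(\beta)=A\beta^{N-1}\exp(-B\beta)\to 0$, so $I(\beta)\to 0$. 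A continuous function on $(0,\infty)$ with finite limits at both endpoints is bounded, say by a constant $M<\infty$.

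With boundedness in hand, the rest is immediate. For any $0<a<b$, since $I(\beta)p(\beta)\ge 0$ and $I(\beta)\le M$,
\begin{equation*}
\int_a^b I(\beta)p(\beta)\,d\beta \;\le\; M\int_a^b p(\beta)\,d\beta \;\le\; M\int_0^\infty p(\beta)\,d\beta \;=\; M,
\end{equation*}
using that $p$ is a proper probability density. The double-parameter integral $\int_a^b I(\beta)p(\beta)\,d\beta$ is monotone nondecreasing as $a\to0^+$ and $b\to\infty$, and bounded above by $M$, hence convergent by the monotone convergence theorem.

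There is no real obstacle here beyond the boundedness claim, and that claim is in turn a direct repackaging of material already established in the proof of the previous proposition (the small-$\beta$ behaviour from Lemma~\ref{lop} and the large-$\beta$ factorization $I=J/\prod(1-e^{-C_i\beta})$). The only point that deserves a line of care is confirming that $I$ has a finite limit, rather than merely being integrable, at $\beta=0$; this follows because the $N-1$ factors of $\beta$ in the numerator are exactly matched to the $N-1$ factors $(1-e^{-C_i\beta})$ in the denominator, so the limit given by Lemma~\ref{lop} applies term by term.
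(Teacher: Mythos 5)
Your proof is correct, but it takes a genuinely different route from the paper's. The paper argues by cases: if $p$ has bounded support, then $I(\beta)p(\beta)$ is bounded on a finite interval and zero elsewhere, so its integral converges; otherwise it applies the limit comparison test with numerator $I(\beta)p(\beta)$ and denominator $I(\beta)$, using $\lim_{\beta\to\infty}p(\beta)=0$ to conclude that convergence of $\int I$ (the preceding proposition) forces convergence of $\int Ip$. You instead prove the stronger claim that $I$ is bounded on all of $(0,\infty)$ --- finite limit at $0$ via Lemma~\ref{lop}, since the $N-1$ powers of $\beta$ in the numerator pair off against the $N-1$ factors $(1-e^{-C_i\beta})$ in the denominator, and limit $0$ at infinity via the factorization $I=J/\prod_{i=2}^N(1-e^{-C_i\beta})$ --- and then bound $\int_a^b Ip\le M\int_0^\infty p = M$ using only properness of the prior. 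Your version is the more robust of the two: it requires nothing of $p$ beyond $\int p=1$, whereas the paper's limit comparison tacitly assumes the density vanishes at infinity (which a proper prior need not satisfy), and the paper's unbounded-support case never explicitly treats the lower endpoint $a\to0$, where $p$ could be unbounded though integrable; your uniform bound $I\le M$ disposes of both endpoints at once. The only cost is that you lean on the explicit closed form of $I$ to get boundedness, but that form was already exhibited and exploited in the proof of the proposition, so nothing new is needed.
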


\begin{proof}
If $p(\beta)$ has finite support, then $I(\beta)p(\beta)$ is bounded on a finite interval and zero elsewhere, and the intergral of such a function must converge. If not, then use of the limit comparison test with numerator $I(\beta)p(\beta)$ and denominator $I(\beta)$ gives that, because $\lim_{\beta\to\infty}p(\beta)=0$, $\int_a^b I(\beta)p(\beta)d\beta$ converges if $\int_a^b I(\beta)d\beta$ does, and we know this to be true.
\end{proof}

\begin{remark}
Notice that if $p(\beta)$ is, for example a uniform infinite improper prior or a gamma distribution, $I(\beta)p(\beta)$ takes the form $D\beta^k\textrm{exp }(-E\beta)f(\beta)$ for a function $f$ where $D$ and $E$ are positive constants and $k>1$. Generalised Gauss-Laguerre quadrature is therefore a natural choice for the estimation of $\int_0^\infty I(\beta)p(\beta)d\beta$ for such a $p(\beta)$. 
\end{remark}

Next, we need to calculate $p(\mathcal{T} | \mathcal{N}, \mathbf{T}^{\mathrm{inf}}, \psi, \mathbf{T}^{\mathrm{exam}})$. We extend the procedure outlined by Didelot et al\cite{didelot_bayesian_2014} to allow for the use of any of the standard models of deterministic population growth, and the possibility of host heterogeneity. The latter is accomplished by dividing the set of hosts into categories and assigning a separate demographic model to all the hosts in each one. Categories can be assigned from known epidemiological data about the hosts; for example, in a livestock disease outbreak, they may reflect the size of farm. Formally, let $\mathbf{C}^{\textrm{coal}}$, a finite set of size $p$, be the set of categories, and $cc:\{1,\ldots,N\}\to\mathbf{C}^{\textrm{coal}}$ the map assigning them to the index of each host in $A$. If it is not desired to accommodate heterogeneity in this way, $p$ can be 1. Every element $\mathbf{c}\in\mathbf{C}^{\textrm{coal}}$ corresponds to a separate demographic function $N_\mathbf{c}:\mathbb{R}\rightarrow[0,\infty)$ with parameters $\psi_\mathbf{c}$ where $N_\mathbf{c}(t)$ is the product of the effective population size and the generation time at time $t$.

Given a host $a_i\in\mathbf{A}$ which is infected at time $t^{\mathrm{inf}}_i$, sampled at time $t^{\mathrm{exam}}_i$ and ceases to be infectious at time $t^{\mathrm{end}}_i$, and has $n$ children $a_{o(1)},\ldots,a_{o(n)}$ (for some permutation $o$ of $\{1,\ldots,N\}$) infected at times $t^{\mathrm{inf}}_{o(1)},\ldots,t^{\mathrm{inf}}_{o(n)}$, suppose $\mathcal{S}_i$ is a phylogenetic tree that describes the part of the outbreak that took place within $a_i$. It has has $n+1$ tips, one for each infection event and one for its own sampling event. If $m=\textrm{max }\{t^{\mathrm{inf}}_{o(1)},\ldots,t^{\mathrm{inf}}_{o(n)},t^{\mathrm{exam}}_{i}\}$, the height (in the tree $\mathcal{S}_i$) $h_i(r)$ of its root node $r$ is less than $m-t^{\mathrm{inf}}_i$ and we can give it a root branch of length $m-h_i(r)-t^{\mathrm{inf}}_i$. If we have a $\mathcal{S}_i$ for each $h$, and we know $\mathcal{N}$, we can build a phylogenetic tree for the entire epidemic by attaching the root node of each $\mathcal{S}_i$ to the tip of $\mathcal{S}_{\mathcal{N}(i)}$ that corresponds to the infection of $a_i$, by a branch with length equal to the root branch length of $\mathcal{S}_i$. If $\mathcal{T}$ cannot be built up from $\mathcal{S}_i$s in this way, $p(\mathcal{T} | \mathcal{N}, \mathbf{T}^{\mathrm{inf}}, \psi, \mathbf{T}^{\mathrm{exam}})=0$. Otherwise, we calculate it as:
\begin{eqnarray*}
p(\mathcal{T} | \mathcal{N}, \mathbf{T}^{\mathrm{inf}}, \psi, \mathbf{T}^{\mathrm{exam}})= \prod_{i\in\{1,\ldots,N\}}p(\mathcal{S}_{i}|\psi_{cc(i)})
\end{eqnarray*}
In the standard coalescent model \cite{slatkin_pairwise_1991}, the probability density function for the for the time $t$ of the first coalescence of $K\geq2$ lineages after $t_0$ where the demographic function is $N_\mathbf{c}$ is given by:

\begin{eqnarray*}
p(t)&=&\frac{K(K-1)}{2N_\mathbf{c}(t)}\textrm{ exp}\left(-\int_{t_0}^t\frac{K(K-1)}{2N_\mathbf{c}(s)}ds\right)\\
\end{eqnarray*}
and as usual, if we know the two specific lineages that converged, the $K(K-1)/2$ cancels.

The cumulative density function of this is:
\begin{eqnarray*}
P(t)&=&\int_{t_0}^t\frac{K(K-1)}{2N_\mathbf{c}(r)}\textrm{ exp}\left(-\int_{t_0}^r\frac{K(K-1)}{2N_\mathbf{c}(s)}ds\right)dr\\
&=&1-\textrm{exp}\left(-\int_{t_0}^t\frac{K(K-1)}{2N_\mathbf{c}(s)}ds\right)
\end{eqnarray*}

and the probability that there were no coalescences between $t_0$ and $t$ is 1 minus this.

As Didelot et al. \cite{didelot_bayesian_2014} note, this is not quite sufficient for our purposes because we have a maximum height for the last coalescence. If this is $t_{\mathrm{max}}$, the normalised probability distribution for the time of first coalescence is:

\begin{eqnarray*}
p(t|T) =  \begin{cases} \frac{\frac{K(K-1)}{2N_\mathbf{c}(t)}\textrm{exp}\left(-\int_{t_0}^t\frac{K(K-1)}{2N_\mathbf{c}(s)}ds\right)}{1-\textrm{exp}\left(-\int_{t_0}^{t_{\mathrm{max}}}\frac{K(K-1)}{2N_\mathbf{c}(s)}ds\right)} & t_0\leq t<t_{\mathrm{max}}\\
0 & \mbox{otherwise}
\end{cases}
\end{eqnarray*}

This is the probability of an interval in $\mathcal{S}_h$ ending in a coalescent event. The probability of an interval ending in a transmission or sampling event is the probability that no events occur in the interval, which is one minus the cumulative distribution function of the above, $P(t|T)$:

\begin{eqnarray*}
1-P(t|T)&=&1-\frac{1-\textrm{exp}\left(-\int_{t_0}^t\frac{K(K-1)}{2N_\mathbf{c}(s)}ds\right)}{1-\textrm{exp}\left(-\int_{t_0}^{t_{\mathrm{max}}}\frac{K(K-1)}{2N_\mathbf{c}(s)}ds\right)}\\
&=&\frac{\textrm{exp}\left(-\int_{t_0}^t\frac{K(K-1)}{2N_\mathbf{c}(s)}ds\right)-\textrm{exp}\left(-\int_{t_0}^{t_{\mathrm{max}}}\frac{K(K-1)}{2N_\mathbf{c}(s)}ds\right)}{1-\textrm{exp}\left(-\int_{t_0}^{t_{\mathrm{max}}}\frac{K(K-1)}{2N_\mathbf{c}(s)}ds\right)}\\
\end{eqnarray*}

Note that while in the case of no maximum root height, the formula happens to work for $K=1$, here it does not as the denominator is 0, and we instead set the probability of any coalescent interval with one lineage to 1. In particular, if $a_i$ has no children then $p(\mathcal{S}_{i}|\psi_{cc(i)})=1$.

If $t_{\mathrm{max}}=m-t^{\mathrm{inf}}_i$, these formulae can be used to calculate $p(\mathcal{S}_i|\psi_{cc(i)})$ for every $\mathcal{S}_{i}$ in the established way for a tree with temporally offset tips \cite{drummond_estimating_2002}, and the product of these is the full probability of the complete phylogeny. It is most intuitive to standardise the timescale within each $\mathcal{S}_{i}$ such that the effective population size at the point of the infection (the maximum root height) is the same across all hosts. As a result, we depart from the normal convention of making height 0 the time of the last tip (which will occur at a different point in the course of infection in different hosts), and instead put it at the point of infection, with all later events occurring at negative heights.

The choice of each demographic function $N_\mathbf{c}$ is wide. For an epidemic situation, exponential or logistic growth \cite{slatkin_pairwise_1991, pybus_epidemic_2001} would be most appropriate. Different categories $\mathbf{c}\in\mathbf{C}$ may be assigned the same family of demographic model but a different set of parameters $\psi_\mathbf{c}$. As our method is integrated within BEAST, any of the functions already implemented in that package can be used without additional programming work.

The next term in the decomposition is $p(\mathcal{N} | \mathbf{T}^{\mathrm{inf}}, \phi, \mathbf{T}^{\mathrm{end}}, L)$. The instantaneous probability that host $a_i$ was infected by host $a_{\mathcal{N}(i)}$ at time $t^{\mathrm{inf}}_i$ is $\beta d(a_i,a_{\mathcal{N}(i)})$, and if we condition on the fact that $a_i$ was indeed infected by \emph{some} host at $t^{\mathrm{inf}}_i$ then we normalise by the sum $\sum_{a_j\in\mathbf{A}_i}\beta d(a_i,a_j)$ where $\mathbf{A}_i$ is the subset of $\mathbf{A}$ whose elements have infection times before $t^{\mathrm{inf}}_i$ and noninfectiousness times after it. In this normalisation the $\beta$s cancel, leaving an expression solely in terms of the distance function. The probability of the infection of the first host $a_1$ is once again set to 1. The expression is:

\begin{eqnarray*}
p(\mathcal{N} | \mathbf{T}^{\mathrm{inf}}, \phi, \mathbf{T}^{\mathrm{end}}, L) = \prod_{a_i\in\mathbf{A}\setminus a_1}\frac{d(a_i,a_{\mathcal{N}(i)})}{\sum_{a_j\in\mathbf{A}_i}d(a_i,a_j)}
\end{eqnarray*}

There are many possible choices for the function $d$. If we assume no spatial structure or heterogeneity then we can just take $d(a_i,a_j)=1$ for all $a_i,a_j\in\mathbf{A}$. Otherwise, it can be based on Euclidean distance, or on a network metric. It can also be used to state prior information about the transmission tree structure; if it is known \emph{a priori} that $a_i$ did not infect $a_j$, then $d(a_i,a_j)$ can be set to zero. While we have assumed it up to this point, there is also no requirement that $d$ be symmetric.

The calculation of $p(\mathbf{T}^{\mathrm{inf}} | \mathbf{T}^{\mathrm{end}})$, the probability of the times of infection, can be handled in a number of ways. It is effectively the calculation of the probability of the time from infection to noninfectiousness, $t^{\mathrm{end}}_i - t^{\mathrm{inf}}_i$, of each host $a_i$. Previous work on foot-and-mouth disease virus \cite{cottam_integrating_2008, morelli_bayesian_2012} has used clinical data to estimate times of infection, and if this kind of information is available, it can be used to determine a separate prior distribution for each $t^{\mathrm{end}}_i - t^{\mathrm{inf}}_i$. If we cannot use information of this type, we take a similar approach to that in the coalescent calculations above and assign each host $a_i$ to a category $ic(a_i)$ from a finite set $\mathbf{C}^{\mathrm{inf}}$ of size $q$. This again allows us to accommodate known host heterogeneity; for example in an agricultural outbreak it is likely that times from infection to noninfectiousness decrease as time goes by and control measures are brought to bear. Once again, if we do not want to incorporate such heterogeneity we can set $q=1$. If the infectious period of the disease is well understood, we can assign a single prior distribution for $t^{\mathrm{end}}_i - t^{\mathrm{inf}}_i$ for all hosts in each category. 

It may be, however, that we want to estimate the distribution of infectious periods from the genetic data. In this case we take each $t^{\mathrm{end}}_i - t^{\mathrm{inf}}_i$ within a category as a draw from a probability distribution with unknown parameters, and then put hyperpriors on those parameters. A noninformative option is to regard each as a draw from an unknown normal distribution whose mean we are uninterested in (as we can just as well calculate the mean of the sampled values of each infectious period post-hoc) and use the Jeffreys prior on its standard deviation, such that $p(\mathbf{T}^{\mathrm{inf}} | \mathbf{T}^{\mathrm{end}})$ is proportional to the reciprocal of the standard deviation of all the infectious periods in the category. This can obviously be done on the logarithm of the infectious periods instead, if we prefer the assumption that they are lognormally distributed. Alternatively, we can use an informative prior. To avoid having to use MCMC to estimate both the parameters $\chi$ of a probability distribution $D$ and a series of draws from that distribution, we integrate out the actual values of $\chi$ by using $D$'s conjugate prior for them and then calculating the marginal likelihood of the infectious periods given the hyperpriors. Any continuous probability distribution with a prior whose marginal likelihood is analytically tractable can be considered. A normal distribution is not absolutely ideal as infectious periods are non-negative parameters, but it does have the useful property that its mean and variance are independent, unlike most other candidates for $D$ (such as lognormal, exponential or gamma). We suggest it still be considered as an option if infectious periods are expected to be sufficiently long, and their variance sufficiently small, that the probability density contained in the area less than 0 would negligible if a normal distribution were used.

Finally, all that remains is to place prior distributions on the parameters making up $\phi$, $\psi$, and $\omega$.

\subsection*{Latent periods}

The above formulation has taken the course of infection to follow a SIR structure; hosts are assumed to be infectious as soon as they are infected. It is straightforward to replace this with a SEIR structure instead. We add an extra set of parameters $\mathbf{T}^{\textrm{trans}}$ consisting of the time of infectiousness $t^{\textrm{trans}}_i$ of each host $a_i\in\mathbf{A}$. In the MCMC procedure these are calculated by adding a $r_{i}\in[0,1]$ such that, if $t^{\textrm{maxtrans}}_i=\textrm{min }(\{t^{\textrm{end}}_i\}\cup\{t^{\textrm{inf}}_j : \mathcal{N}(j)=i\})$ ($t^{\textrm{maxtrans}}_i$ being the upper bound on $t^{\textrm{trans}}_i$ determined by $\mathbf{T}^{\textrm{inf}}$ and $\mathbf{T}^{\textrm{end}}$), then $t^{\textrm{trans}}_i = t^{\textrm{inf}}_i + r_i(t^{\textrm{maxtrans}}_{i}-t^{\textrm{inf}})$.  (We assume that hosts are infectious at the time they cease to be infected, but not that they necessarily are at the time of sampling.) Simple MCMC moves on numerical parameters are then employed to sample values of each $r_{i}$. The phylogeny $\mathcal{T}$ is assumed to be conditionally independent of $\mathbf{T}^{\textrm{trans}}$ given $\mathbf{T}^{\textrm{inf}}$.

The decomposition becomes:

\begin{eqnarray*}
p(\mathcal{T}, \mathcal{N}, \mathbf{T}^{\mathrm{inf}}, \mathbf{T}^{\textrm{trans}}, \phi, \psi, \omega, \mathbf{T}^{\mathrm{exam}}, \mathbf{T}^{\mathrm{end}}, L) &=& p(\beta|\mathcal{N}, \mathbf{T}^{\mathrm{inf}}, \mathbf{T}^{\textrm{trans}}, \phi, \mathbf{T}^{\mathrm{exam}}, \mathbf{T}^{\mathrm{end}}, L)\\
&& \times p(\mathcal{T} | \mathcal{N}, \mathbf{T}^{\mathrm{inf}}, \psi, \mathbf{T}^{\mathrm{exam}})\\
&& \times p(\mathcal{N} | \mathbf{T}^{\mathrm{inf}}, \mathbf{T}^{\textrm{trans}}, \phi, \mathbf{T}^{\mathrm{end}}, L)\\
&& \times p(\mathbf{T}^{\mathrm{inf}}, \mathbf{T}^{\textrm{trans}} |  \mathbf{T}^{\mathrm{end}})\\
&& \times p(\phi)p(\psi)p(\omega)
\end{eqnarray*}

The only modifications to the existing procedure that are needed to calculate the first and third elements in this product involve accounting for the fact that infectious pressure is now only applied after the end of a host's latent period. The term $p(\mathbf{T}^{\mathrm{inf}}, \mathbf{T}^{\textrm{trans}} |  \mathbf{T}^{\mathrm{end}})$ is calculated by retaining the existing procedure for infectious periods and also picking a suitable distribution, or possibly set of distributions according to a set $\mathbf{C}^{\mathrm{lat}}$ of categories, for latent periods. The marginal likelihood of the set of infectious periods $\{t^{\mathrm{end}}_i - t^{\mathrm{trans}}_i|i\in\{1,\ldots,N\}\}$ and latent periods $\{t^{\mathrm{trans}}_i - t^{\mathrm{inf}}_i|i\in\{1,\ldots,N\}\}$ is then calculated as before. 

\section{Conclusions and future work}

The most obvious limitation to the method outlined here is the requirement that the tree contain a tip from every host involved in the outbreak. Note that this is not, in fact, a requirement that a sequence be taken from every host, as, if we have no sample from some but are aware of their existence, we can use epidemiological data for them along with a noninformative sequence (consisting entirely of the nucleotide code `N') and integrate over their unknown true sequences. Their existence would then still contribute to estimation of epidemiological parameters, and their estimated placement in the transmission tree would be informed by their geographical locations if those were included in the model. The performance of this procedure for varying numbers of unknown sequences warrants investigation in simulations. Nevertheless, this is a solution only where all unsampled hosts are actually known to investigators, which will not always be the case. Non-phylogenetic methods for transmission tree reconstruction using genetic data have started to consider the case of unsampled hosts \cite{jombart_bayesian_2014,mollentze_bayesian_2014} and work is needed to introduce this element to our framework. This could be done by introducing a variable number of partitions containing no tips, possibly in a reversable-jump MCMC framework, or by simply assigning tree nodes to ``nonsampled'' subtrees with no specific enumeration of how many extra hosts these represent.

Another enhancement, potentially useful in HIV studies where multiple samples are taken from the same patient over time, would be to relax the partition rules to allow each subtree to contain more than one tip. The adjustments to the method needed to accomplish this are likely considerably less onerous than allowing for unsampled hosts.

In conclusion, in this document we have outlined the framework for co-estimation of transmission trees as part of an analysis performed in one of the most widely-used software packages for phylogeny reconstruction. Results from analyses of both simulated and real data will follow. It is, as of the time of writing, implemented in current development builds of BEAST.

\section{Acknowledgements}

We would like to thank Trevor Bedford, Samantha Lycett and Melissa Ward for contributions to the development of this model. MH was supported by a PhD studentship from the Scottish Government-funded EPIC programme, and the research leading to these results has received funding from the European Union Seventh Framework Programme for research, technological development and demonstration under Grant Agreement no. 278433-PREDEMICS

\begin{figure*}
\centering
\includegraphics[width=16.0cm]{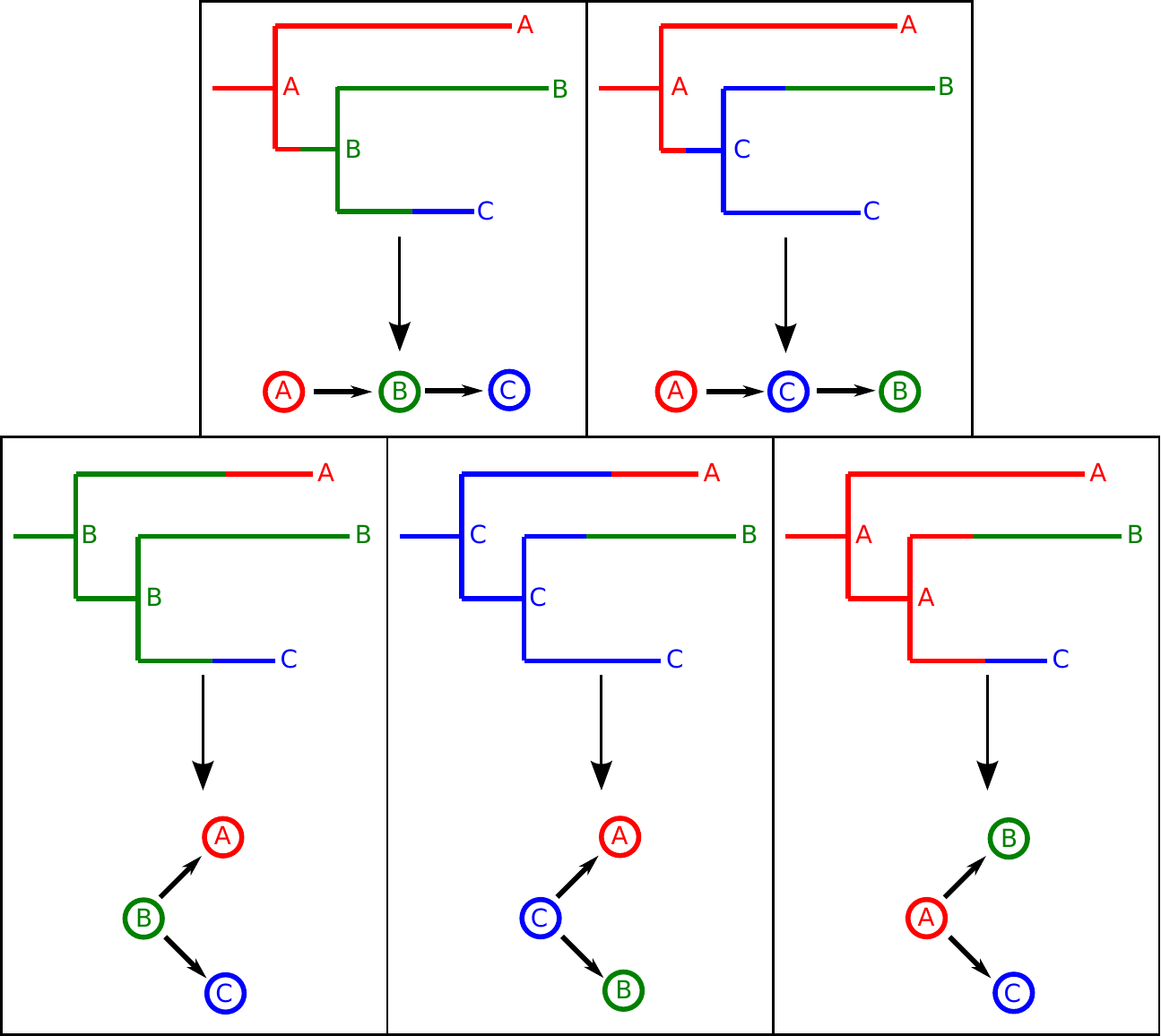}
\caption{The five compatible transmission tree structures of a phylogenetic tree with three tips, depicted as partitions of the phylogeny (above) and as directed graphs amongst the hosts A B and C (below)}\label{sameoldsameold}
\end{figure*}

\begin{figure*}
\centering
\includegraphics[width=16.0cm]{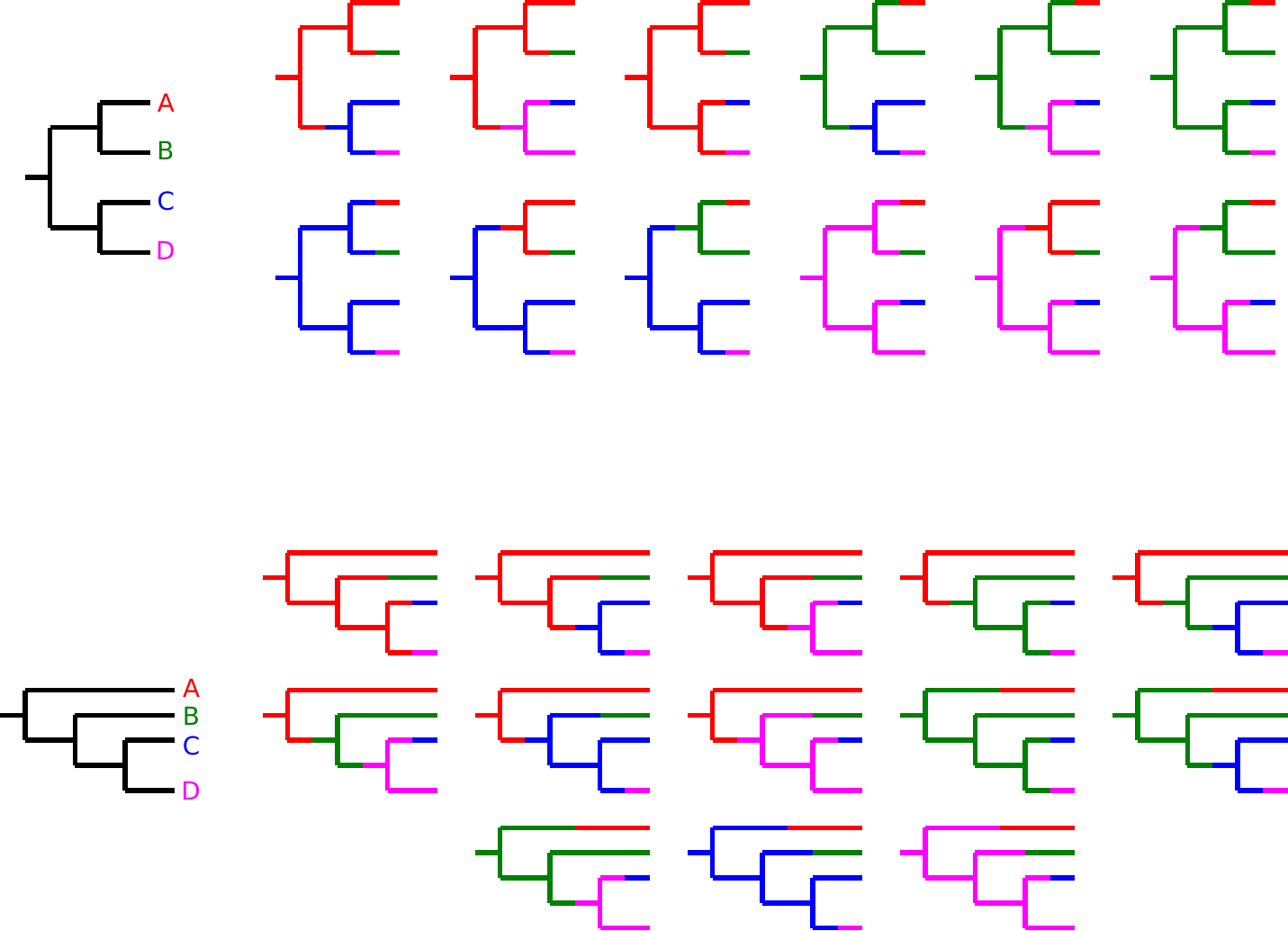}
\caption{Above: the twelve valid partitions of the phylogeny ((A,B),(C,D)). Below: the thirteen valid partitions of the phylogeny (A,(B,(C,D)))}\label{4partitions}
\end{figure*}

\begin{figure*}
\centering
\includegraphics[width=18.0cm]{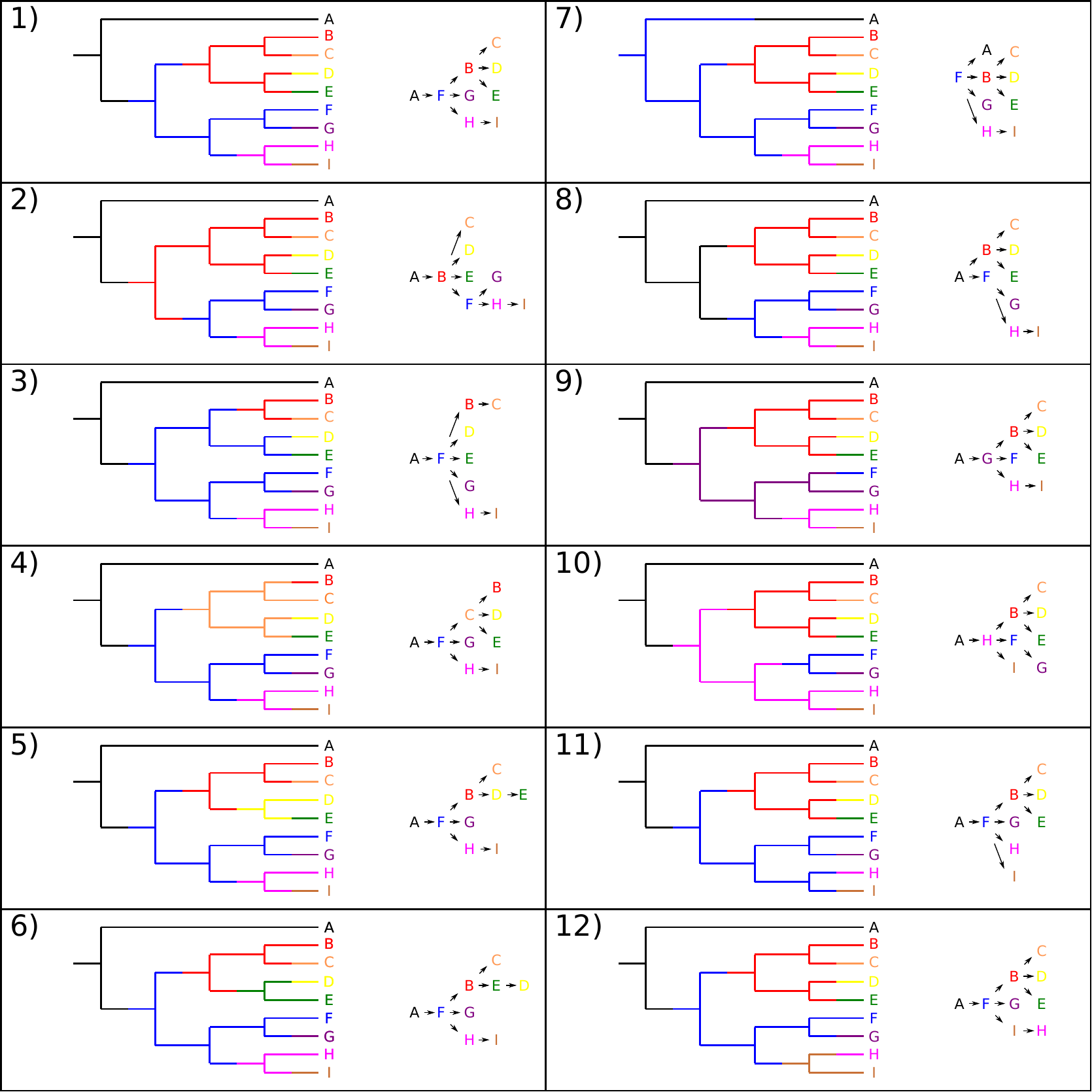}
\caption{Illustration of the effects of the infection branch operator on the partition $\mathcal{P}$ of a phylogeny of samples from the set of hosts A-I, and corresponding effects on the transmission tree. 1. Original partition. 2. Downward move on B. 3. Upward move on B. 4. Downward move on C. 5. Downward move on D. 6. Downward move on E. 7. Downward move on F. 8. Upward move on F. 9. Downward move on G. 10. Downward move on H. 11. Upward move on H. 12. Downward move on I.}\label{ttoperator}
\end{figure*}

\newpage

\bibliography{Partitions_references_short}

\begin{thebibliography}{10}
\providecommand{\url}[1]{\texttt{#1}}
\providecommand{\urlprefix}{URL }
\expandafter\ifx\csname urlstyle\endcsname\relax
  \providecommand{\doi}[1]{doi:\discretionary{}{}{}#1}\else
  \providecommand{\doi}{doi:\discretionary{}{}{}\begingroup
  \urlstyle{rm}\Url}\fi
\providecommand{\bibAnnoteFile}[1]{%
  \IfFileExists{#1}{\begin{quotation}\noindent\textsc{Key:} #1\\
  \textsc{Annotation:}\ \input{#1}\end{quotation}}{}}
\providecommand{\bibAnnote}[2]{%
  \begin{quotation}\noindent\textsc{Key:} #1\\
  \textsc{Annotation:}\ #2\end{quotation}}
\providecommand{\eprint}[2][]{\url{#2}}

\bibitem{cottam_integrating_2008}
Cottam EM, Th{\'e}baud G, Wadsworth J, Gloster J, Mansley L, et~al. (2008)
  Integrating genetic and epidemiological data to determine transmission
  pathways of foot-and-mouth disease virus.
\newblock Proceedings of the Royal Society B: Biological Sciences 275: 887-895.
\bibAnnoteFile{cottam_integrating_2008}

\bibitem{aldrin_modelling_2011}
Aldrin M, Lyngstad TM, Kristoffersen AB, Storvik B, Borgan {\O}, et~al. (2011)
  Modelling the spread of infectious salmon anaemia among salmon farms based on
  seaway distances between farms and genetic relationships between infectious
  salmon anaemia virus isolates.
\newblock Journal of The Royal Society Interface 8: 1346-1356.
\bibAnnoteFile{aldrin_modelling_2011}

\bibitem{ypma_unravelling_2011}
Ypma RJF, Bataille AMA, Stegeman A, Koch G, Wallinga J, et~al. (2011)
  Unravelling transmission trees of infectious diseases by combining genetic
  and epidemiological data.
\newblock Proceedings of the Royal Society B: Biological Sciences 279: 444-450.
\bibAnnoteFile{ypma_unravelling_2011}

\bibitem{jombart_reconstructing_2011}
Jombart T, Eggo RM, Dodd PJ, Balloux F (2011) Reconstructing disease outbreaks
  from genetic data: a graph approach.
\newblock Heredity 106: 383-390.
\bibAnnoteFile{jombart_reconstructing_2011}

\bibitem{morelli_bayesian_2012}
Morelli MJ, Th{\'e}baud G, Chad{\oe}uf J, King DP, Haydon DT, et~al. (2012) A
  bayesian inference framework to reconstruct transmission trees using
  epidemiological and genetic data.
\newblock {PLoS} Computational Biology 8: e1002768.
\bibAnnoteFile{morelli_bayesian_2012}

\bibitem{ypma_relating_2013}
Ypma RJF, Ballegooijen WMv, Wallinga J (2013) Relating phylogenetic trees to
  transmission trees of infectious disease outbreaks.
\newblock Genetics 195: 1055-1062.
\bibAnnoteFile{ypma_relating_2013}

\bibitem{jombart_bayesian_2014}
Jombart T, Cori A, Didelot X, Cauchemez S, Fraser C, et~al. (2014) Bayesian
  reconstruction of disease outbreaks by combining epidemiologic and genomic
  data.
\newblock {PLoS} Computational Biology 10: e1003457.
\bibAnnoteFile{jombart_bayesian_2014}

\bibitem{didelot_bayesian_2014}
Didelot X, Gardy J, Colijn C (2014) Bayesian inference of infectious disease
  transmission from whole genome sequence data.
\newblock Molecular Biology and Evolution .
\bibAnnoteFile{didelot_bayesian_2014}

\bibitem{mollentze_bayesian_2014}
Mollentze N, Nel LH, Townsend S, Roux Kl, Hampson K, et~al. (2014) A bayesian
  approach for inferring the dynamics of partially observed endemic infectious
  diseases from space-time-genetic data.
\newblock Proceedings of the Royal Society B: Biological Sciences 281:
  20133251.
\bibAnnoteFile{mollentze_bayesian_2014}

\bibitem{vrancken_genealogical_2014}
Vrancken B, Rambaut A, Suchard MA, Drummond A, Baele G, et~al. (2014) The
  genealogical population dynamics of {HIV-1} in a large transmission chain:
  bridging within and among host evolutionary rates.
\newblock {PLoS} Computational Biology 10: e1003505.
\bibAnnoteFile{vrancken_genealogical_2014}

\bibitem{bataille_evolutionary_2011}
Bataille A, van~der Meer F, Stegeman A, Koch G (2011) Evolutionary analysis of
  inter-farm transmission dynamics in a highly pathogenic avian influenza
  epidemic.
\newblock {PLoS} Pathogens 7: e1002094.
\bibAnnoteFile{bataille_evolutionary_2011}

\bibitem{drummond_bayesian_2012}
Drummond AJ, Suchard MA, Xie D, Rambaut A (2012) Bayesian phylogenetics with
  {BEAUti} and the {BEAST} 1.7.
\newblock Molecular Biology and Evolution 29: 1969-1973.
\bibAnnoteFile{drummond_bayesian_2012}

\bibitem{vaughan_efficient_2014}
Vaughan TG, K{\"u}hnert D, Popinga A, Welch D, Drummond AJ (2014) Efficient
  bayesian inference under the structured coalescent.
\newblock Bioinformatics : btu201.
\bibAnnoteFile{vaughan_efficient_2014}

\bibitem{wilson_genealogical_1998}
Wilson IJ, Balding DJ (1998) Genealogical inference from microsatellite data.
\newblock Genetics 150: 499-510.
\bibAnnoteFile{wilson_genealogical_1998}

\bibitem{drummond_estimating_2002}
Drummond AJ, Nicholls GK, Rodrigo AG, Solomon W (2002) Estimating mutation
  parameters, population history and genealogy simultaneously from temporally
  spaced sequence data.
\newblock Genetics 161: 1307 -1320.
\bibAnnoteFile{drummond_estimating_2002}

\bibitem{hohna_clock-constrained_2008}
Hohna S, Defoin-Platel M, Drummond A (2008) Clock-constrained tree proposal
  operators in bayesian phylogenetic inference.
\newblock In: 8th {IEEE} International Conference on {BioInformatics} and
  {BioEngineering}, 2008. {BIBE} 2008. pp. 1-7.
\newblock \doi{10.1109/BIBE.2008.4696663}.
\bibAnnoteFile{hohna_clock-constrained_2008}

\bibitem{felsenstein_evolutionary_1981}
Felsenstein J (1981) Evolutionary trees from {DNA} sequences: a maximum
  likelihood approach.
\newblock Journal of Molecular Evolution 17: 368-376.
\bibAnnoteFile{felsenstein_evolutionary_1981}

\bibitem{drummond_relaxed_2006}
Drummond AJ, Ho SYW, Phillips MJ, Rambaut A (2006) Relaxed phylogenetics and
  dating with confidence.
\newblock {PLoS} Biology 4: e88.
\bibAnnoteFile{drummond_relaxed_2006}

\bibitem{gibson_fitting_1996}
Gibson GJ, Austin EJ (1996) Fitting and testing spatio-temporal stochastic
  models with application in plant epidemiology.
\newblock Plant Pathology 45: 172--184.
\bibAnnoteFile{gibson_fitting_1996}

\bibitem{slatkin_pairwise_1991}
Slatkin M, Hudson RR (1991) Pairwise comparisons of mitochondrial {DNA}
  sequences in stable and exponentially growing populations.
\newblock Genetics 129: 555-562.
\bibAnnoteFile{slatkin_pairwise_1991}

\bibitem{pybus_epidemic_2001}
Pybus OG, Charleston MA, Gupta S, Rambaut A, Holmes EC, et~al. (2001) The
  epidemic behavior of the hepatitis {C} virus.
\newblock Science 292: 2323-2325.
\bibAnnoteFile{pybus_epidemic_2001}

\end{thebibliography}

\end{document}